\documentclass[runningheads]{llncs}
\usepackage{amsmath,amssymb}
\usepackage{cite}
\usepackage[inline]{enumitem}
\usepackage[T1]{fontenc}
\usepackage[scaled=0.85]{beramono}
\usepackage{nicefrac}
\usepackage{xcolor}
\usepackage{tikz,pgfplots} \pgfplotsset{compat=1.18}
\usepackage{mathtools,calc}
\usetikzlibrary{graphs, quotes, backgrounds, fit}
\usepackage{wrapfig}
\usepackage{fontawesome}
\usepackage{pdflscape}
\usepackage{multirow}
\usepackage{afterpage}
\usepackage{orcidlink}
\hypersetup{hidelinks}

\newlength{\digitwd}
\AtBeginDocument{\setlength{\digitwd}{\widthof{$0$}}}
\newcommand{\digitx}{\mathmakebox[\digitwd][c]{\vphantom{0}\star}}

\makeatletter

\@addtoreset{theorem}{section}

\let\c@lemma\c@theorem

\let\c@proposition\c@theorem

\let\c@corollary\c@theorem

\let\c@definition\c@theorem

\let\c@remark\c@theorem

\let\c@example\c@theorem

\makeatother
\newcommand{\originalautomaton}{\mathcal{A}}
\newcommand{\isomorphautomaton}{\mathcal{H}}
\newcommand{\approximateautomaton}{\mathcal{T}}

\newcommand{\task}{\tau}

\newcommand{\weaklyhardconstraint}{\texttt{Const}}
\newcommand{\trim}[3]{\ensuremath{\texttt{\textbf{Trim}}\,(#1, #2, #3)}}
\newcommand{\anymiss}[2]{\ensuremath{\texttt{\textbf{AnyMiss}}\,(#1, #2)}}
\newcommand{\anyhit}[2]{\ensuremath{\texttt{\textbf{AnyHit}}\,(#1, #2)}}
\newcommand{\rowmiss}[2]{\ensuremath{\texttt{\textbf{RowMiss}}\,(#1, #2)}}
\newcommand{\rowhit}[2]{\ensuremath{\texttt{\textbf{RowHit}}\,(#1, #2)}}
\newcommand{\weaklyhardh}{h}
\newcommand{\weaklyhardm}{m}
\newcommand{\weaklyhardk}{k}
\newcommand{\alphabet}{\Sigma}
\newcommand{\alphabetletter}{\sigma}
\newcommand{\deadlinemiss}{0}
\newcommand{\deadlinehit}{1}
\newcommand{\traceitem}{a}

\newcommand{\adj}{\Pi}
\newcommand{\adjelement}{\pi}
\newcommand{\adjm}{\Pi_{\deadlinemiss}}
\newcommand{\adjh}{\Pi_{\deadlinehit}}
\newcommand{\cl}{\Phi}
\newcommand{\clAm}{\cl_{\deadlinemiss}}
\newcommand{\clAh}{\cl_{\deadlinehit}}

\newcommand{\approxim}{c}
\newcommand{\lang}{\mathcal{L}}

\newcommand{\wordlength}{\ell}
\newcommand{\langlength}[1]{#1^{=\wordlength}}

\newcommand{\cardinality}[1]{|#1|}

\newcommand{\comprate}{\approxim}

\newcommand{\aha}{\originalautomaton_{\weaklyhardm, \weaklyhardk}}
\newcommand{\ahav}{V}
\newcommand{\ahae}{E}
\newcommand{\caha}{\isomorphautomaton_{\weaklyhardm, \weaklyhardk}}
\newcommand{\cahav}{U}
\newcommand{\cahae}{F}
\newcommand{\compaha}{\approximateautomaton_{\weaklyhardm, \weaklyhardk, \comprate}}
\newcommand{\compahav}{U_{\comprate}}
\newcommand{\compahae}{F_{\comprate}}
\newcommand{\critahav}{U_{\comprate, \text{critical}}}
\newcommand{\difference}{d}
\newcommand{\critdahav}{U_{\comprate, \text{critical}, \difference}}
\newcommand{\e}{e}
\newcommand{\critdeahav}{U_{\comprate, \text{critical}, \difference, \e}}
\newcommand{\noncritahav}{U_{\comprate, \text{non-critical}}}

\newcommand{\ahanode}{v}
\newcommand{\ahainitnode}{v_{\text{init}}}
\newcommand{\ahainitnodearg}[1]{v_{\text{init}, #1}}
\newcommand{\cahanode}{u}
\newcommand{\cahainitnode}{u_{\text{init}}}
\newcommand{\cahaedge}{f}

\newcommand{\indexfunc}{g}
\newcommand{\mapfunc}{n}
\newcommand{\bijfunc}{f}
\newcommand{\indextuple}{p}

\newcommand{\natnumbers}{\mathbb{N}}
\newcommand{\realnumbers}{\mathbb{R}}
\newcommand{\realnumbersmatrix}[2]{\mathbb{R}^{#1 \times #2}}

\newcommand{\binaryrel}{R}
\newcommand{\binaryrelsign}{\lesssim}
\newcommand{\binaryrelsignlong}{\binaryrelsign_\binaryrel}

\newcommand{\abs}[1]{\left \lvert #1 \right \rvert}
\newcommand{\norm}[1]{\left \lVert #1 \right \rVert}
\newcommand{\onenorm}[1]{\left \lVert #1 \right \rVert_1}

\newcommand{\initstatevector}{p}
\newcommand{\accstatevector}{f}
\newcommand{\eigenvalue}{\lambda}
\newcommand{\eigenvectorx}{x}
\newcommand{\eigenvectory}{y}

\title{Over-approximation of weakly-hard constraints for control systems verification (Extended)}
\titlerunning{Over-approximation of weakly-hard constraints}
\author{Rieke de Maeyer, Holger Hermanns\,\orcidlink{0000-0002-2766-9615}, Martina Maggio\thanks{Corresponding author.}\,\orcidlink{0000-0002-1143-1127}}
\institute{Department of Computer Science, Saarland University\\Saarland Informatics Campus}

\begin{document}

\maketitle

\begin{abstract}
A hard real-time system cannot miss any deadline.
A weakly-hard real-time system, on the contrary, is designed to tolerate a specific number of deadline misses.
For instance, the \anymiss{2}{300} weakly-hard constraint stipulates that in every window of $300$ consecutive jobs, at most $2$ deadlines are missed.
The weakly-hard model is the state-of-the-art for industrial dependability-by-design of control systems that tolerate deterministic failures.
Weakly-hard constraints correspond to regular languages.
The size of the minimal finite state machine that recognizes whether a string satisfies the constraint (about $45k$ states for \anymiss{2}{300}) is a notorious impediment for the verification of control system properties.
This paper discusses an over-approximation of the language that allows us to provide sound safety guarantees for control systems under deadline misses that would be out of reach using the minimal finite state machine.
We present a compressed language acceptor and prove that it simulates the original finite state machine.
We study language cardinality properties, and report on empirical results that show how the new acceptor can be embedded in the control design workflow, leading to verifying safety for systems for which the state-of-the-art tools do not provide answers.  
\end{abstract}

\section{Introduction}
Modern Cyber-Physical Systems (CPSs) integrate dense sensing, actuation, and embedded computing under strict real-time requirements.
In domains such as autonomous driving, robotics, and industrial automation, increasing complexity makes timing coordination difficult, leading to sporadic deadline misses and delayed actuation.
As testified by a recent industrial survey it is common for real systems to have tasks missing deadlines: ``45\% of respondents indicated that the most time critical functions in the system could miss some deadlines, and 20\% indicated that deadline misses more often than 1 in 100 could be tolerated''~\cite{Akesson20}.
Even if infrequent, these timing problems can critically affect safety and stability of control systems, motivating the emergence of models that capture realistic failure patterns.
Weakly-hard constraints~\cite{Bernat01, Chao19} are widely used in industry~\cite{Hammadeh17b, Ahrendts18, Maggio20, Hammadeh20, Gallant25} to model deterministic timing failures, including deadline misses~\cite{Akesson20} and message losses~\cite{Linsenmayer17}.
The weakly-hard models bound not only how often faults occur, but also how they may be distributed over time, typically constructing admissible bit strings of successes ($1$) and failures ($0$).
The families of weakly-hard constraints are \anyhit{\weaklyhardh}{\weaklyhardk}, \anymiss{\weaklyhardm}{\weaklyhardk}, \rowhit{\weaklyhardh}{\weaklyhardk}, and \rowmiss{\weaklyhardm}{\weaklyhardk}.
For instance, \anymiss{2}{300} limits a control task to 2 deadline misses every 300 consecutive iterations, while \anyhit{8}{10} can enforce at least 8 correct message transfers in any 10 consecutive transmissions.
The \anymiss{\weaklyhardm}{\weaklyhardk} family is especially prominent in the verification of the stability of control systems under deadline misses and message losses~\cite{Pazzaglia18, Xu23, Vreman22b, Hobbs24, Seidel24, Hertneck21}. Weakly-hard windows that are found in industrial systems are often large (hundreds of samples)~\cite{Bernat01, Akesson20}, and the corresponding automata grow combinatorially with the size of the window~\cite{Vreman22a}.

The stability verification procedure uses two fundamental ingredients: the dynamics of the closed-loop system, and the weakly-hard pattern of how faults can appear, the latter represented as a finite automaton. The two ingredients are combined in a cross-product construction akin to what characterizes the automata-theoretic approach to model checking~\cite{DBLP:books/el/07/Vardi07,DBLP:reference/mc/Kupferman18}. The resulting structure is represented as a set of two matrices, of which the joint eigenvalue spectrum needs to be analyzed in order to prove system stability of the system in the presence of any of these fault patterns. The latter analysis however suffers from the size of the weakly-hard language acceptor that is one component of the cross-product. For instance, the language acceptor for $\anymiss{2}{300}$ has about $45k$ states. The problem can be alleviated by over-approximating the constraint by a more permissive constraint with a smaller window size. For instance, $\anymiss{2}{50}$ has just 1225 states and covers all fault patterns of $\anymiss{2}{300}$, but also many more. Stability under the smaller-window constraint therefore implies stability under the larger-window one, while instability in the smaller-window model does not translate back. In practice however, the additional fault patterns make it notoriously impossible to verify stability, since the additional failures introduce drift in the joint spectral characteristics. In fact, the joint eigenvalues all need to be below 1 to ensure system stability. This renders many real-world systems unverifiable in the presence of realistic weakly-hard constraints. 

The present paper contributes the following. Based on a very careful analysis of the structure of the minimum-size $\anymiss{\weaklyhardm}{\weaklyhardk}$ language acceptor, we come up with a family of automata $\trim{\weaklyhardm}{\weaklyhardk}{\comprate}$ with a configurable compression factor $\comprate$. We prove that each of them accepts an over-approximation of the language induced by $\anymiss{\weaklyhardm}{\weaklyhardk}$. The compressed automata can be drastically smaller without inducing a substantial drift of the joint eigenvalues. The latter is validated empirically on a set of 20 control benchmarks taken from the literature for which we report verification results, for all of which verification under large constraints ($k \approx 300$) is not feasible without over-approximation.
Note that the addressed situation is the following: large industrial window sizes $\weaklyhardk \approx 300$ and relatively small miss budgets $\weaklyhardm$, which is the practically relevant regime for weakly-hard control under rare deadline misses.

The paper is organized as follows.
Section~\ref{sec:background} introduces background knowledge and formalizes the problem statement.
Section~\ref{sec:compression} discusses the solution that we propose.
Section~\ref{sec:experiments} shows experimental evidence that our solution works well in practice, while Section~\ref{sec:conclusion} concludes the paper.
Appendices~\ref{sec:proofs},~\ref{sec:size} and~\ref{sec:cardinality} discuss additional matters and the proofs that are omitted in the text.

\section{Background and problem statement}
\label{sec:background}

CPS stability verification typically considers a model of a closed-loop system, which includes the physical dynamics and its  computational control model. We focus on linear dynamical systems, as is standard when linearizing around operating points~\cite{Astrom11, Maggio20}.
As done in the literature, the plant-controller evolution is modeled with two closed-loop matrices: $\clAh$ for nominal operation and $\clAm$ for faulty operation (e.g., induced by a deadline miss), extending the classic discrete-time linear time-invariant closed-loop system to 
\begin{equation}
\label{eqn:dyn}
x_{k+1}=\cl_{\sigma(k)}\, x_k, \quad\quad\quad \forall k, \sigma(k) \in \Sigma = \{ 0, 1 \}.
\end{equation}

A weakly-hard constraint determines how $\sigma(k)$ can vary over time, as a result of successions of task execution that hit or miss their deadlines. This is abstractly represented as task execution traces.

\begin{definition}[Task Execution Trace]
\label{def:inf-word}
The infinite word $(\traceitem_i)_{i \in \mathbb{N}^{+}}, \traceitem_i \in \{0, 1\}$ represents the execution trace of a task. 
\end{definition}

A value $\traceitem_i=0$ signifies that the $i$-th iteration of the task missed its deadline, while $\traceitem_i=1$ implies that the task hit the  $i$-th deadline. For $i>j$ we use $\traceitem_{i:j}$ to refer to the finite subword of $(\traceitem_i)_{i \in \mathbb{N}^{+}}$ spanned by the closed interval $[j,i]$. For example, when $\traceitem_{7:3} = 01101$ we have $\traceitem_7 = \traceitem_4 = 0$, while $\traceitem_6 =\traceitem_5 = \traceitem_3 = 1$. Note that indexing is in descending order.

The weakly-hard task model~\cite{Bernat01} (the \emph{de facto} standard to analyze systems that experience deadline misses) introduces four different types of tasks, two of which are relevant for this paper.

\begin{definition}[Weakly-Hard Task] \label{def:wh}
Given $\weaklyhardk \in \mathbb{N}^+$ and $\weaklyhardm, \weaklyhardh \in \{ n \in \mathbb{N}^+ \mid n < \weaklyhardk \}$, $\task$ is a weakly-hard task if it satisfies a constraint $\weaklyhardconstraint$, denoted $\task \vdash \weaklyhardconstraint$, which can be of one of the two following forms:
\begin{equation}
\begin{array}{c}
\task \vdash \anymiss{\weaklyhardm}{\weaklyhardk} \coloneq \forall i \in \mathbb{N}^{+}, \sum_{j=i}^{i+\weaklyhardk-1} (1-\traceitem_j) \leq \weaklyhardm, \\[2mm]
\task \vdash \anyhit{\weaklyhardh}{\weaklyhardk} \coloneq \forall i \in \mathbb{N}^{+}, \sum_{j=i}^{i+\weaklyhardk-1} \traceitem_j \geq \weaklyhardh.
\end{array}
\end{equation}
\end{definition}

For an $\anymiss{\weaklyhardm}{\weaklyhardk}$ constraint, in every window of $\weaklyhardk$ consecutive activations, the number of deadline misses is at most $\weaklyhardm$. For an  \anyhit{\weaklyhardm}{\weaklyhardk} constraint, in every window of $\weaklyhardk$ consecutive activations, the number of deadline hits is at least $\weaklyhardh$.

\begin{remark}
\label{lem:anyhitanymissequivalence}
The constraints are dual~\cite{Bernat01} in the sense that for $\weaklyhardh=\weaklyhardk-\weaklyhardm$,  
$$\task \vdash \anymiss{\weaklyhardm}{\weaklyhardk} \text{\qquad iff \qquad} \task \vdash \anyhit{\weaklyhardh}{\weaklyhardk}.$$
\end{remark}

\begin{remark}
\label{lem:minimum size}
Each weakly-hard constraint corresponds to a regular language and hence is recognized by a finite-state automaton $\originalautomaton$. The minimum automaton size needed is combinatorial in the window length: It comprises $\nicefrac{\weaklyhardk!}{\weaklyhardm!\,(\weaklyhardk-\weaklyhardm)!}$ states for \anymiss{\weaklyhardm}{\weaklyhardk} (and $\nicefrac{\weaklyhardk!}{\weaklyhardh!\,(\weaklyhardk-\weaklyhardh)!}$ for \anyhit{\weaklyhardh}{\weaklyhardk})~\cite{Vreman22}. The connection to the binomial coefficient $\binom{k}{m}$ is no coincidence. It is rooted in the need to distinguish all distinct ways of selecting $m$ zeroes in a window size of $k$.
\end{remark}

The link to Equation~(\ref{eqn:dyn}) is as follows. A given constraint \anymiss{\weaklyhardm}{\weaklyhardk} constrains how $\sigma(k)$ can vary over time, intuitively as a result of successions of task executions that hit or miss their deadlines. From the corresponding acceptor automaton $\originalautomaton$, adjacency matrices $\adjm$ and $\adjh$ for faulty executions (deadline misses) and nominal executions (deadline hits) can be constructed with the state count of both $\adjm$ and $\adjh$ being each equal to the automaton state count.

Combining the pair $(\clAm,\clAh)$ with the pair $(\adjm,\adjh)$, gives the lifted matrices $\adjm\otimes\,\clAm$ and $\adjh\otimes\,\clAh$ (where $\otimes$ is the Kronecker product), enabling stability verification via the joint spectral radius ($\text{jsr}$)~\cite{Rota60, Jungers09, Gripenberg96} (i.e.,~the eigenvalue with largest magnitude of the infinitely long matrix products of a set of matrices). The closed-loop system that experiences faulty executions is stable iff 
\begin{equation}
\label{eq:stability}
 \text{jsr}\,(\adjm\otimes\,\clAm, \adjh\otimes\,\clAh) < 1.
\end{equation}
This principle has substantially advanced the analysis of deadline misses in control tasks with demonstrated industrial impact~\cite{Vreman21, Gallant24, Gallant25}. It however suffers from the combinatorial growth of~$\originalautomaton$ and its contribution to the Kronecker product, if considering real world applicability.\footnote{Note that an alternative approach is to calculate the constrained joint spectral radius~\cite{Philippe16, Dercole17} and only build the products of matrices that correspond to viable systems evolutions, shifting the cause of the combinatorial growth problem from the matrices $(\adjm,\adjh)$ to the algorithmic choice of which evolutions to consider.}

\vspace{2mm}
\textbf{Notation.}
We overload the notation $\cardinality{\cdot}$ in three ways: if $S$ is a set, $\cardinality{S}$ denotes set cardinality. If $\originalautomaton$ is a finite automaton, the  $\cardinality{\originalautomaton}$ denotes its size (number of states). If $w$ is a finite word, then  $\cardinality{w}$ denotes its length.  We refer to the regular language accepted by a finite automaton $\originalautomaton$ as  $\lang\,(\originalautomaton)$. If $\lang$ is a language, we use the superscript $\lang^{=\ell}$ to restrict to  words of length $\ell$, thus $\lang^{=\ell} := \{ w \in \lang \mid \cardinality{w} = \ell \}$.

\subsection{Problem statement}
\label{sec:problem}

If we want to verify the stability of real-world examples, there are two important facts to consider:
\begin{enumerate*}[label=(\roman*)]
\item as discussed in the literature~\cite{Bernat01, Akesson20}, we would like to verify stability under constraints with sizes $\weaklyhardk \approx 300$, and 
\item usually the size of the system dynamics (represented as $\clAm$ and $\clAh$) is orders of magnitude smaller than that of $\adjm$ and $\adjh$.
\end{enumerate*}

In fact, the size of the automaton grows too fast with $\weaklyhardk$ (for \anymiss{3}{5} we build an automaton with $10$ states, while for \anymiss{3}{100} we reach $161$\,$700$ states, and for \anymiss{3}{300} the automaton has $4$\,$455$\,$100$ states).
This means that the limit of what can be verified are actually set by the size of the automaton that corresponds to the weakly-hard constraint.

Given the impossibility of determining whether the statement in Equation~\eqref{eq:stability} holds for large matrices due to scalability problems, we are thus striving for an over-approximation of the weakly-hard constraint to make the problem tractable.
Over-approximation here refers to the language accepted by the approximating automaton, relative to the acceptor automaton $\originalautomaton$. The approximand can contain more deadline misses than the original, but should be smaller in size.  

We now assume a certain weakly-hard constraint is given, with $\originalautomaton$ being its minimum-size acceptor automaton. The problem is to find an approximating automaton $\approximateautomaton$, such that:
\begin{enumerate}[label=(\roman*)]
\item $\lang\,(\originalautomaton) \subseteq \lang \, (\approximateautomaton)$,
\item $\cardinality{\approximateautomaton} \leq  \cardinality{\originalautomaton}$, and 
\item the stability verification problem~\eqref{eq:stability} is easier to solve with $\approximateautomaton$ than the original problem, also if considering other known approximations.
\end{enumerate}
The third item will mainly be studied  empirically, by examining  the verification effort needed in concrete scenarios, but we also provide analytical insights into this aspect. 

\section{Finding an automaton compression}
\label{sec:compression}

In this section we discuss our solution.
Our starting point is the minimum-size automaton that corresponds to an $\anymiss{\weaklyhardm}{\weaklyhardk}$, or equivalently $ \anyhit{\weaklyhardk-\weaklyhardm}{\weaklyhardk}$, weakly-hard constraint~\cite{Vreman22}.

We first establish an isomorphism on automaton nodes, for the purpose of transforming this automaton into one that provides us with information that we exploit for subsequent compression.
For the latter, we identify a set of nodes to remove from the automaton, together with rewiring its transitions.
Once we have performed these operations, we obtain a compressed automaton, denoted \trim{\weaklyhardm}{\weaklyhardk}{\comprate} where $\comprate \in \mathbb{N}^+$ is a compression factor.
We show that \trim{\weaklyhardm}{\weaklyhardk}{\comprate} simulates the minimal automaton that corresponds to the \anymiss{\weaklyhardm}{\weaklyhardk} weakly-hard constraint, regardless of the choice of $\comprate$. This  guarantees language inclusion, thus the first item of the problem defined in Section~\ref{sec:problem}. The second constraint is the consequence of the construction, which is parametric in $\comprate$. 

We now turn to $\aha = (\ahav, \ahae)$, the minimal automaton that corresponds to an \anymiss{\weaklyhardm}{\weaklyhardk} constraint, where $\ahav$ is the set of nodes and $\ahae$ the set of transitions. We shall provide a rigid definition soon, but to sharpen intuition we explain its construction~\cite{Vreman22} upfront.

Any node $\ahanode \in \ahav$ in the automaton is a word $\ahanode_{k:1}$ of length $k$, which corresponds to a window of the execution trace of length $k$ satisfying the weakly-hard constraint. Specifically we have the window positioned on the execution trace fragment $\traceitem_{k+i-1:i}$ after the completion of the $i$-th controller iteration.  When in $\ahanode_{k:1}$, reading symbol $0$ corresponds to the transition $(\ahanode_{k:1} \overset{\smash{0}}{\to} \ahanode_{k-1:1}\,0)$, while reading $1$ corresponds to the transition $(\ahanode_{k:1} \overset{\smash{1}}{\to} \ahanode_{k-1:1}\,1)$. Some of the transitions labelled $0$ however are impossible, because they violate the constraint. At the same time, some states represent more than one bitstring of length $k$, for the sake of minimality. For instance, when a transition moves the window to the right, it might be irrelevant what value moved out of the window, or it may not. In the former situations, the value is denoted $\digitx$.   Formally, we can define the automaton as follows, adapted from~\cite{Vreman22} (where it is presented as the result of automata minimisation).

\begin{definition}[Weakly-Hard Automaton for \anymiss{\weaklyhardm}{\weaklyhardk}]
The weakly-hard automaton $\aha = (\ahav, \ahae)$ can be written as
\begin{equation*}
\begin{array}{rcl}
\ahav &=& \{ \digitx^p \, \ahanode_{\weaklyhardk-p:1} | \ahanode_{\weaklyhardk-p:1} \in \{ 0, 1 \}^{\weaklyhardk-p} \land \ahanode_{\weaklyhardk-p}=\deadlinehit \land \\[2mm]
& & \phantom{\{\digitx^p \, \ahanode_{\weaklyhardk-p:1} |}\sum_i^{\weaklyhardk-p} v_i = \weaklyhardk-\weaklyhardm, \,\, \forall p \in \mathbb{N} \;|\; 0 \leq p \leq \weaklyhardm \},\\[2mm]
\ahae &=& \{(\ahanode_{\weaklyhardk:1}, i, \digitx^p\ahanode_{\weaklyhardk-p-1:1}i) \in \ahav \times \{0, 1\} \times \ahav \; | \; 0 \leq p \leq \weaklyhardm \}.
\end{array}
\end{equation*}
with initial node $\ahainitnode = \digitx^{\weaklyhardm}\,1^{\weaklyhardk-\weaklyhardm}$. All nodes are accepting.
\end{definition}
This automaton accepts all prefixes of the infinite-word language formed by the execution traces (Definition~\ref{def:inf-word}) satisfying the constraint.
Note that the underlying task-execution semantics are infinite-word semantics.
However, the weakly-hard constraint is prefix-closed: an infinite trace is admissible iff all its finite prefixes are admissible with respect to the sliding-window rule.
A corresponding Buchi automaton~\cite{Buchi1962}, accepting the infinite-word language is obtained by adding a trap state which is the only non-accepting, has loops for $1$ and $0$, and is reached by $0$ from any other state which is otherwise missing an outgoing zero-transition.
However, for the purpose of stability verification it is sufficient to describe all the evolutions that can manifest in the system behavior, and hence it is sufficient to use the finite prefixes automaton.

Figure~\ref{fig:example_automaton} shows the minimal automaton $\originalautomaton_{2,5}$ for the \anymiss{2}{5} constraint as an example.
The symbols $\digitx$ correspond to either zero or one.
If interpreting the $\digitx$ values as $0$, it is immediately obvious that the the states enumerate all possible options to place $2$ zeroes in a bitstring of length $5$, thus $5 \choose 2$ states result. 
In the following, it will be convenient to take $\digitx$ as corresponding to $0$ for calculations, and to keep in mind that the relevant portion of the node label is the suffix without any $\digitx$.

\begin{figure}[th]
\centering
\hspace{-5mm} 
\begin{tikzpicture}[>=stealth]
\graph[nodes={draw, rectangle, rounded corners, inner sep=4pt}, no placement,
    edge quotes={fill=white,inner sep=1pt}]
    {
        ini/\textcolor{white}{x}[white, at ={(-1.25,0)}];
    	XX111/$\phantom{\langle}\digitx\digitx111\phantom{\rangle}$[thick, double, double distance=0.05cm, at={(0,0)}];
		X1110/$\phantom{\langle}\digitx1110\phantom{\rangle}$[thick, double, double distance=0.05cm, at={(0,-1.5)}];
		X1101/$\phantom{\langle}\digitx1101\phantom{\rangle}$[thick, double, double distance=0.05cm, at={(6,-1.5)}];
		11100/$\phantom{\langle}11100\phantom{\rangle}$[thick, double, double distance=0.05cm, at={(0,-3)}, fill=black!10];
		11001/$\phantom{\langle}11001\phantom{\rangle}$[thick, double, double distance=0.05cm, at={(2,-3)}, fill=black!10];
    	10011/$\phantom{\langle}10011\phantom{\rangle}$[thick, double, double distance=0.05cm, at={(4,-3)}, fill=black!10];
		11010/$\phantom{\langle}11010\phantom{\rangle}$[thick, double, double distance=0.05cm, at={(6,-3)}, fill=black!10];
		10101/$\phantom{\langle}10101\phantom{\rangle}$[thick, double, double distance=0.05cm, at={(8,-3)}, fill=black!10];
		X1011/$\phantom{\langle}\digitx1011\phantom{\rangle}$[thick, double, double distance=0.05cm, at={(10,-1.5)}];
		10110/$\phantom{\langle}10110\phantom{\rangle}$[thick, double, double distance=0.05cm, at={(10,-3)}, fill=black!10];	

        {ini} -> XX111,
	    {XX111 [< "$1$"] } ->[loop above] XX111,
	    {XX111 [< "$0$"] } -> X1110,
	    {X1110 [< "$0$"] } -> 11100,
	    {X1110 [< "$1$"] } -> X1101,
	    {11100 [< "$1$"] } -> 11001,
	    {11001 [< "$1$"] } -> 10011,
	    {10011 [< "$1$"] } ->[bend right] XX111,
	    {X1101 [< "$0$"] } -> 11010,
	    {11010 [< "$1$"] } -> 10101,
	    {10101 [< "$1$"] } -> X1011,
	    {X1011 [< "$0$"] } -> 10110,
        {X1101 [< "$1$"] } -> X1011,
	    {10110 [< "$1$"] } -> X1101,
	    {X1011 [< "$1$"] } ->[bend right] XX111,
	    
    };
\end{tikzpicture}
\caption{Example: Automaton $\originalautomaton_{2,5}$ corresponding to \anymiss{2}{5}.}
\label{fig:example_automaton}
\end{figure}

\subsection{Automaton $\caha$ isomorphic to $\aha$}

We now present a transformation of the automaton (from $\aha$ to $\caha$) that will help introducing the compression algorithm.
We define a function $\indexfunc$ to determine the position of the $i$-th last appended zero in a string that grows on the right-hand side.

\begin{definition}[$\indexfunc$: index of the $i$-th last zero in $v$] \label{def indexfunc}
Function $\indexfunc : \ahav \times \mathbb{N}^{+} \rightarrow \mathbb{N}^{+} \cup \{\infty\}$ gives the index of the $i$-th last zero in $v$, or infinity when no such instance exists.
\begin{equation}
\begin{array}{rcl}
\indexfunc(\ahanode_{s:1}, i) \coloneq 
\begin{cases}
\infty & \sum_{j=1}^{s} (1-\ahanode_j) < i\\
\min_p \sum_{j=1}^{p} (1-\ahanode_j) = i & \text{otherwise}
\end{cases}
\end{array}
\end{equation}
\end{definition}
For example, for $\ahanode = \ahanode_{9:1} = 100111010$, we have $\indexfunc(\ahanode, 1) = 1$, $\indexfunc(\ahanode, 2) = 3$, $\indexfunc(\ahanode, 3) = 7$, $\indexfunc(\ahanode, 4) = 8$ and $\indexfunc(\ahanode, i) = \infty\,\,\forall i>4$.
Using $\indexfunc$, we define a function $\mapfunc$ to rename the nodes and highlight some of their properties.

\begin{definition}[Node mapping] \label{def mapfunc}
Function $\mapfunc : \ahav \times \mathbb{N}^{+} \times \mathbb{N}^{+} \rightarrow \cahav$ maps the nodes in $\ahav$ and the characteristics $\weaklyhardm$ and $\weaklyhardk$ of an \anymiss{\weaklyhardm}{\weaklyhardk} weakly hard constraint into a new set of nodes 
\begin{equation}
\cahav = \left\{ \langle \cahanode_1, \dots, \cahanode_{\weaklyhardm} \rangle \mid 0 \leq \cahanode_i \leq \weaklyhardk - \weaklyhardm \,\land\, \cahanode_i \geq \cahanode_j \text{ for } i < j \right\}
\end{equation}
according to
\begin{equation}
\mapfunc(\ahanode_{k:1}, \weaklyhardm, \weaklyhardk) \coloneq
\langle \cahanode_1, \dots, \cahanode_{\weaklyhardm} \rangle \mid \cahanode_i = \weaklyhardk - \weaklyhardm + i - \indexfunc(\ahanode_{k:1}, i).
\end{equation}
\end{definition}
Having transformed the values of $\digitx$ into zeros ensures that in each label there are exactly $\weaklyhardm$ zeros.
Intuitively, $\cahanode_1$ counts the number of ones that needs to be appended to the original node label to reach a node in the automaton with no zeros in the relevant part of the node label (i.e., the substring that does not contain any $\digitx$) and $\cahanode_2$ counts the number of ones that needs to be appended to reach a state with at most a single zero in the relevant part of the string.
More in general, $\cahanode_i$ counts the number of ones that needs to be appended to reach a node in which at most $i-1$ zeros are present in the relevant part of the string.

We now consider a fixed value for $\weaklyhardm$ and $\weaklyhardk$ and show that, given these values, $\mapfunc(\cdot,\weaklyhardm,\weaklyhardk)$ is a bijective function, i.e., that there is a one-to-one correspondence between elements of $\ahav$ and elements of $\cahav$.

\begin{theorem} \label{f bijective}
Given $\weaklyhardm$ and $\weaklyhardk$, the function $\mapfunc(\cdot,\weaklyhardm,\weaklyhardk): \ahav \rightarrow \cahav$ is bijective.
\begin{proof}
  See Appendix~\ref{proof f bijective}.  
\end{proof}
\end{theorem}

After defining $\cahav$, we now define a set $\cahae$ of transitions between these nodes.
\begin{definition}[Set of transitions] \label{def transitions}
The set of transitions $\cahae$ allows to read the symbol $1$ from all the nodes in the graph and the symbol $0$ only when introducing said $0$ would not violate the \anymiss{\weaklyhardm}{\weaklyhardk} constraint. Formally,
\begin{align*}
\cahae = & \left\{( \langle \cahanode_1, \dots, \cahanode_{\weaklyhardm} \rangle, \deadlinehit, \langle \max(\cahanode_1 - 1, 0), \dots, \max(\cahanode_{\weaklyhardm} - 1, 0) \rangle) \mid \cahanode \in \cahav \right\} \\
& \cup \left\{( \langle \cahanode_1, \dots, \cahanode_{\weaklyhardm} \rangle, \deadlinemiss, \langle \weaklyhardk-\weaklyhardm, \cahanode_1, \dots, \cahanode_{\weaklyhardm - 1} \rangle) \, \mid \, \cahanode \in \cahav \,\land\, \cahanode_{\weaklyhardm} = 0 \right\}.
\end{align*}
\end{definition}

\begin{lemma} \label{F lambda well defined}
$\cahae$ is well-defined, i.e., $\cahae \subseteq \cahav \times \alphabet \times \cahav$.
\begin{proof}
    See Appendix~\ref{proof F lambda well defined}.
\end{proof}
\end{lemma}

Let $\caha = (\cahav, \cahae)$.
All nodes are accepting and the initial node $\cahainitnode$ is the tuple with arity $\weaklyhardm$ that includes only zeros, $\langle 0, \dots, 0 \rangle$.
We conclude this subsection by showing that $\aha$ and $\caha$ equivalently represent the weakly-hard constraint, i.e., they are isomorphic.

\begin{theorem} \label{G and H isomorph}
$\aha$ and $\caha$ are isomorphic.

\begin{proof}
See Appendix~\ref{isomorphism proof}.
\end{proof}
\end{theorem}

\begin{figure}[ht]
\centering
\hspace{-5mm} 
\begin{tikzpicture}[>=stealth]
\graph[nodes={draw, rectangle, rounded corners, inner sep=4pt}, no placement,
    edge quotes={fill=white,inner sep=1pt}]
    {
        ini/\textcolor{white}{x}[white, at ={(-1.25,0)}];
    	XX111/{$\langle 0,0 \rangle$}[thick, double, double distance=0.05cm, at={(0,0)}];
		X1110/{$\langle 3,0 \rangle$}[thick, double, double distance=0.05cm, at={(0,-1.5)}];
		X1101/{$\langle 2,0 \rangle$}[thick, double, double distance=0.05cm, at={(6,-1.5)}];
		11100/{$\langle 3,3 \rangle$}[thick, double, double distance=0.05cm, at={(0,-3)}, fill=black!10];
		11001/{$\langle 2,2 \rangle$}[thick, double, double distance=0.05cm, at={(2,-3)}, fill=black!10];
    	10011/{$\langle 1,1 \rangle$}[thick, double, double distance=0.05cm, at={(4,-3)}, fill=black!10];
		11010/{$\langle 3,2 \rangle$}[thick, double, double distance=0.05cm, at={(6,-3)}, fill=black!10];
		10101/{$\langle 2,1 \rangle$}[thick, double, double distance=0.05cm, at={(8,-3)}, fill=black!10];
		X1011/{$\langle 1,0 \rangle$}[thick, double, double distance=0.05cm, at={(10,-1.5)}];
		10110/{$\langle 3,1 \rangle$}[thick, double, double distance=0.05cm, at={(10,-3)}, fill=black!10];	

        {ini} -> XX111,
	    {XX111 [< "$1$"] } ->[loop above] XX111,
	    {XX111 [< "$0$"] } -> X1110,
	    {X1110 [< "$0$"] } -> 11100,
	    {X1110 [< "$1$"] } -> X1101,
	    {11100 [< "$1$"] } -> 11001,
	    {11001 [< "$1$"] } -> 10011,
	    {10011 [< "$1$"] } ->[bend right] XX111,
	    {X1101 [< "$0$"] } -> 11010,
	    {11010 [< "$1$"] } -> 10101,
	    {10101 [< "$1$"] } -> X1011,
	    {X1011 [< "$0$"] } -> 10110,
        {X1101 [< "$1$"] } -> X1011,
	    {10110 [< "$1$"] } -> X1101,
	    {X1011 [< "$1$"] } ->[bend right] XX111,
	    
    };
\end{tikzpicture}
\caption{Example: Automaton $\isomorphautomaton_{2,5}$ for $\anymiss{2}{5}$.}
\label{fig:example_automaton_renamed}
\end{figure}

We have now shown that we can construct an equivalent automaton $\caha$, that includes information that will be used for the compression introduced in the following.
An example of the construction is given in Figure~\ref{fig:example_automaton_renamed}, where we transform the automaton presented in Figure~\ref{fig:example_automaton}.
To simplify the wording in the following, we introduce two classes of nodes:
\begin{itemize}
\item the critical nodes, which do not admit a zero-transition, i.e., $\cahanode_{\weaklyhardm} \neq 0$, and
\item the non-critical nodes, from which a zero-transition is allowed, i.e., $\cahanode_{\weaklyhardm} = 0$.
\end{itemize}
In Figure~\ref{fig:example_automaton_renamed}, the critical nodes are the nodes with gray background and the non-critical nodes are the nodes with white background.

\subsection{Automaton $\compaha$ as the compression of $\caha$}

We now look for a way to compress $\caha$, in which we only merge critical nodes, which we consider ``similar'' because they all have the property that they do not allow a zero-transition.
Specifically, in building \trim{\weaklyhardm}{\weaklyhardk}{\comprate} we merge a maximum of $\comprate$ nodes according to their requirements in terms of one-transitions to reach specific non-critical nodes.
For each number $i \in \mathbb{N} \mid 1 \leq i \leq \weaklyhardm - 1$, we merge up to $\comprate \in \mathbb{N}^+$ nodes that require the same numbers of one-transitions to reach a node that has at most $i$ zeros in the relevant part.
For example, in Figure~\ref{fig:example_automaton_renamed} critical nodes are considered similar if they require the same number of one-transitions to reach a node with at most one zero, since $\weaklyhardm - 1 = 1$.
This brings us to state that nodes $\langle 1, 1 \rangle$, $\langle 2, 1 \rangle$ and $\langle 3, 1 \rangle$ are similar because with a single one-transition they reach a non-critical node.
Similarly, we consider nodes $\langle 2, 2 \rangle$ and $\langle 3, 2 \rangle$ similar because two one-transitions from these nodes will take the automaton to a non-critical node.

When merging nodes to build \trim{\weaklyhardm}{\weaklyhardk}{\comprate}, one needs to decide which outgoing transition are to be kept.
In the example of Figure~\ref{fig:example_automaton_renamed}, $\langle 2, 2 \rangle$ has a one-transition to $\langle 1, 1 \rangle$ and $\langle 3, 2 \rangle$ has a one-transition to $\langle 2, 1 \rangle$.
Since we want the \trim{\weaklyhardm}{\weaklyhardk}{\comprate} automaton to be an over-approximation of the original \anymiss{\weaklyhardm}{\weaklyhardk} automaton, every word accepted by the \anymiss{\weaklyhardm}{\weaklyhardk} automaton should also be accepted by \trim{\weaklyhardm}{\weaklyhardk}{\comprate}.
Therefore, in a merged node, we need to keep the transitions that allow for a higher number of zeros.
Looking at the example, node $\langle 2, 2 \rangle$ allows two zero-transitions after two one-transitions, whereas $\langle 3, 2 \rangle$ allows only one zero-transition after two one-transitions.
Hence, we keep the outgoing one-transition from node $\langle 2, 2 \rangle$ to $\langle 1, 1 \rangle$.
In the case of the automaton for \anymiss{\weaklyhardm}{\weaklyhardk}, allowing a higher number of zeros corresponds to the node label with smaller numbers.
This implies that we keep the transition to the node with the smallest numbers.

\begin{figure}[t]
\centering
\hspace{-5mm} 
\begin{tikzpicture}[>=stealth]
\graph[nodes={draw, rectangle, rounded corners, inner sep=4pt}, no placement,
    edge quotes={fill=white,inner sep=1pt}]
    {
        ini/\textcolor{white}{x}[white, at ={(-1.25,0)}];
    	XX111/{$\langle 0,0 \rangle$}[thick, double, double distance=0.05cm, at={(0,0)}];
		X1110/{$\langle 3,0 \rangle$}[thick, double, double distance=0.05cm, at={(0,-1.5)}];
		X1101/{$\langle 2,0 \rangle$}[thick, double, double distance=0.05cm, at={(2,-1.5)}];
		11100/{$\langle 3,3 \rangle$}[thick, double, double distance=0.05cm, at={(0,-3)}, fill=black!10];
		11001/{$\langle 2,2 \rangle$}[thick, double, double distance=0.05cm, at={(2,-3)}, fill=black!10];
    	10011/{$\langle 1,1 \rangle$}[thick, double, double distance=0.05cm, at={(4,-3)}, fill=black!10];
		X1011/{$\langle 1,0 \rangle$}[thick, double, double distance=0.05cm, at={(4,-1.5)}];

        {ini} -> XX111,
	    {XX111 [< "$1$"] } ->[loop above] XX111,
	    {XX111 [< "$0$"] } -> X1110,
	    {X1110 [< "$0$"] } -> 11100,
	    {X1110 [< "$1$"] } -> X1101,
	    {11100 [< "$1$"] } -> 11001,
	    {11001 [< "$1$"] } -> 10011,
	    {10011 [< "$1$"] } ->[bend right] XX111,
	    {X1101 [< "$0$"] } -> 11001,
	    {X1011 [< "$0$"] } -> 10011,
        {X1101 [< "$1$"] } -> X1011,
	    {X1011 [< "$1$"] } ->[bend right] XX111,
	    
    };
\end{tikzpicture}
\caption{Example: $\approximateautomaton_{2,5,3}$ for \trim{2}{5}{3}.}
\label{fig:example_automaton_compressed}
\end{figure}

Figure~\ref{fig:example_automaton_compressed} shows the \trim{\weaklyhardm}{\weaklyhardk}{\comprate} automaton of Figure~\ref{fig:example_automaton_renamed} after performing the described compression.
Compressing the automaton in this way reduces the number of nodes but at the same time increases the number of accepted words. For example, the compressed automaton accepts the word $\deadlinemiss \deadlinehit \deadlinehit \deadlinemiss \deadlinehit \deadlinemiss \deadlinemiss$, which is not accepted by the non-compressed automaton.
In the figure, all similar nodes have been merged (meaning that $\comprate = 3$).
However, it is also possible to compress less by defining a maximum number of nodes that are merged.
For example, if this number is limited to two, only the nodes $\langle 2, 2 \rangle$ and $\langle 3, 2 \rangle$ as well as $\langle 1, 1 \rangle$ and $\langle 2, 1 \rangle$ in Figure~\ref{fig:example_automaton_renamed} would be merged and $\langle 3, 1 \rangle$ would remain.
Setting $\comprate=1$ makes the non-compressed automaton for \anymiss{\weaklyhardm}{\weaklyhardk} a special case of \trim{\weaklyhardm}{\weaklyhardk}{\comprate}.

We now restrict ourself to the case in which $\weaklyhardm > 1$, otherwise \trim{\weaklyhardm}{\weaklyhardk}{\comprate} and \anymiss{\weaklyhardm}{\weaklyhardk} will produce the same automaton. We define the compressed automaton $\compaha = (\compahav, \compahae)$ for \trim{\weaklyhardm}{\weaklyhardk}{\comprate}.
\begin{definition}[Compressed automaton
]
We define the compressed automaton $\compaha = (\compahav, \compahae)$ as
\begin{align*}
\compahav = & \,\,\,\{ \langle \cahanode_1, \dots, \cahanode_{\weaklyhardm} \rangle \mid \, 0 \leq \cahanode_i \leq \weaklyhardk - \weaklyhardm \,\land\, \cahanode_i \geq \cahanode_j \text{ for } i < j \\ 
& \phantom{\,\,\,\{ \langle \cahanode_1, \dots, \cahanode_{\weaklyhardm} \rangle \mid \,}
\land\, ((\cahanode_1 - \cahanode_2) \text{ mod } \comprate = 0 \text{ or } \cahanode_{\weaklyhardm} = 0) \},\\
\compahae = & \,\,\,\{( \langle \cahanode_1 \dots, \cahanode_{\weaklyhardm} \rangle, \deadlinehit, \langle \max(\cahanode_1 - 1, 0), \dots, \max(\cahanode_{\weaklyhardm} - 1, 0) \rangle) \mid \cahanode \in \compahav \} \\
& \,\,\,\cup \{( \langle \cahanode_1, \dots, \cahanode_{\weaklyhardm} \rangle, \deadlinemiss, \langle \cahanode_0, \cahanode_1, \dots, \cahanode_{\weaklyhardm - 1} \rangle) \mid  \cahanode \in \compahav \,\land\, \cahanode_{\weaklyhardm} = 0  \\
& \quad \,\, \,\land\, \cahanode_0 =
\begin{cases}
\weaklyhardk - \weaklyhardm \qquad \qquad \qquad \qquad \qquad \quad\,\, \text{ if } \cahanode_{\weaklyhardm - 1} = 0 \\
\weaklyhardk - \weaklyhardm - (\weaklyhardk - \weaklyhardm - \cahanode_1) \text{ mod } \comprate \quad \quad \text{otherwise}
\end{cases} \}.
\end{align*}
All nodes are accepting, and the initial node is $\langle 0, \dots, 0 \rangle$.
\end{definition}

\begin{lemma} \label{F lambda m well defined}
$\compahae$ is well-defined, i.e., $\compahae \subseteq \compahav \times \alphabet \times \compahav$.
\begin{proof}
See Appendix~\ref{proof F lambda m well defined}.
\end{proof}
\end{lemma}

Now that we have defined a candidate compression, we want to prove that the language recognized by the automaton that corresponds to a given \anymiss{\weaklyhardm}{\weaklyhardk} is a subset of the language recognized by the \trim{\weaklyhardm}{\weaklyhardk}{\comprate} automaton, i.e., $\lang\,(\aha) \subseteq \lang\,(\compaha)$ (the first point in our problem statement from Section~\ref{sec:problem}).
To do so, we show that the compressed automaton $\compaha = (\compahav, \compahae)$ simulates $\caha = (\cahav, \cahae)$, which is isomorphic to $\aha$.

\begin{theorem} \label{simulation argument}
$\compaha$ simulates $\caha$ via
\begin{align*}
\binaryrel = \{ ( \langle \cahanode_1, \dots, \cahanode_{\weaklyhardm} \rangle, \langle \hat{\cahanode}_1, \dots, \hat{\cahanode}_{\weaklyhardm} \rangle ) \mid \cahanode \in \cahav, \hat{\cahanode} \in \compahav &\,\land\, \cahanode_i \geq \hat{\cahanode}_i \}.
\end{align*}
We write $\cahanode \binaryrelsignlong \hat{\cahanode}$ or simply $\cahanode \binaryrelsign \hat{\cahanode}$ to express that $(\cahanode,\hat{\cahanode})\in \binaryrel$.

\begin{proof}
$\compaha$ simulates $\caha$ via $\binaryrel$ if the following conditions hold~\cite[p. 13]{Park81}:
\begin{enumerate} [label = \emph{(\roman*)}]
\item $\binaryrel \subseteq \cahav \times \compahav$
\item $\cahainitnode \binaryrelsign \hat{\cahanode}_{init}$
\item If $\cahanode$ is an accepting node and $\cahanode \binaryrelsign \hat{\cahanode}$, then $\hat{\cahanode}$ is also an accepting node.
\item If $\cahanode \binaryrelsign \hat{\cahanode}$, $\alphabetletter \in \alphabet$ and $(\cahanode, \alphabetletter, \cahanode') \in \cahae$, then there exists a $\hat{\cahanode}'$ such that $(\hat{\cahanode}, \alphabetletter, \hat{\cahanode}') \in \compahae$ and $\cahanode' \binaryrelsign \hat{\cahanode}'$.
\end{enumerate}
\begin{enumerate} [label = \emph{(\roman*)}]
\item{
This follows directly from the definition of $\binaryrel$.
}
\item{
Obviously, $\cahainitnode = \langle 0, \dots, 0 \rangle \binaryrelsign \langle 0, \dots, 0 \rangle = \hat{\cahanode}_{init}$ is valid.
}
\item{
Since all nodes of $\compaha$ and $\caha$ are accepting, this condition is satisfied.
}
\item{
Let $\cahanode \binaryrelsign \hat{\cahanode}$ with $\cahanode = \langle \cahanode_1, \dots, \cahanode_{\weaklyhardm} \rangle$ and $\hat{\cahanode} = \langle \hat{\cahanode}_1, \dots, \hat{\cahanode}_{\weaklyhardm} \rangle$. Then $\cahanode \in \cahav$, $\hat{\cahanode} \in \compahav$, and $\cahanode_i \geq \hat{\cahanode}_i$.

First, let $\alphabetletter = \deadlinemiss$ and $(\cahanode, \deadlinemiss, \cahanode') \in \cahae$. Then $\cahanode_{\weaklyhardm} = 0$. Since $\cahanode_{\weaklyhardm} \geq \hat{\cahanode}_{\weaklyhardm}$, also $\hat{\cahanode}_{\weaklyhardm} = 0$ and there exists a transition $(\hat{\cahanode}, \deadlinemiss, \hat{\cahanode}') \in \compahae$. We have
\begin{align*}
\cahanode' &= \langle \weaklyhardk-\weaklyhardm, \cahanode_1, \dots, \cahanode_{\weaklyhardm - 1} \rangle \\
\hat{\cahanode}' &= 
\begin{cases}
\langle \weaklyhardk-\weaklyhardm, \hat{\cahanode}_1, \dots, \hat{\cahanode}_{\weaklyhardm - 1} \rangle &\text{if } \hat{\cahanode}_{\weaklyhardm - 1} = 0 \\
\langle \weaklyhardk-\weaklyhardm - (\weaklyhardk-\weaklyhardm - \hat{\cahanode}_1) \text{ mod } \comprate, \hat{\cahanode}_1, \dots, \hat{\cahanode}_{\weaklyhardm - 1} \rangle &\text{otherwise.}
\end{cases}
\end{align*}
$\cahanode' \in \cahav$ and $\hat{\cahanode}' \in \compahav$ follow from the well-definedness of $\cahae$ and $\compahae$, see Lemma \ref{F lambda well defined} and \ref{F lambda m well defined}. For $i > 1$, $\cahanode'_i = \cahanode_{i - 1} \geq \hat{\cahanode}_{i - 1} = \hat{\cahanode}'_i$ . For $i = 1$, $\cahanode'_1 \geq \hat{\cahanode}'_1$ is valid in the first case because $\weaklyhardk-\weaklyhardm \geq \weaklyhardk-\weaklyhardm$, and in the second case since $\weaklyhardk-\weaklyhardm \geq \weaklyhardk-\weaklyhardm - (\weaklyhardk-\weaklyhardm - \hat{\cahanode}_1) \text{ mod } \comprate$, see Inequality \ref{modulo leq applied}.

Next, let $\alphabetletter = \deadlinehit$ and $(\cahanode, \deadlinehit, \cahanode') \in \cahae$. Since there exists a one-transition in $\compahae$ for every node in $\compahav$, there exists a transition $(\hat{\cahanode}, \deadlinehit, \hat{\cahanode}') \in \compahae$. We have
\begin{align*}
\cahanode' &= \langle \max(\cahanode_1 - 1, 0), \dots, \max(\cahanode_{\weaklyhardm} - 1, 0) \rangle \\
\hat{\cahanode}' &= \langle \max(\hat{\cahanode}_1 - 1, 0), \dots, \max(\hat{\cahanode}_{\weaklyhardm} - 1, 0) \rangle.
\end{align*}
Again, $\cahanode' \in \cahav$ and $\hat{\cahanode}' \in \compahav$ follow from the well-definedness of $\cahae$ and $\compahae$. $\cahanode'_i = \max(\cahanode_i - 1, 0) \geq \max(\hat{\cahanode}_i - 1, 0) = \hat{\cahanode}'_i$ because of $\cahanode_i \geq \hat{\cahanode}_i$.
}
\end{enumerate}
\end{proof}
\end{theorem}
Simulation implies trace inclusion~\cite{DBLP:conf/birthday/LarsenSS22} which agrees with language inclusion because all states are accepting. 
With the isomorphism result (Theorem~\ref{G and H isomorph}) we have established the first item in our problem statement.

\begin{corollary}
$\lang\,(\aha) \subseteq \lang \, (\compaha)$.
\end{corollary}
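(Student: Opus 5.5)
We derive the corollary by composing the isomorphism of Theorem~\ref{G and H isomorph} with the simulation of Theorem~\ref{simulation argument}. First we use Theorem~\ref{G and H isomorph}: the bijection $\mapfunc(\cdot,\weaklyhardm,\weaklyhardk)$ of Theorem~\ref{f bijective} maps $\ahainitnode=\digitx^{\weaklyhardm}1^{\weaklyhardk-\weaklyhardm}$ to $\cahainitnode=\langle 0,\dots,0\rangle$. Reading $\digitx$ as $0$, the $i$-th last zero sits at position $\weaklyhardk-\weaklyhardm+i$, so $\cahanode_i=0$ for all $i$. The bijection also preserves labelled transitions and acceptance, since all nodes are accepting in both automata. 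Consequently a word $w$ has a run from $\ahainitnode$ in $\aha$ if and only if it has a run from $\cahainitnode$ in $\caha$, which gives $\lang\,(\aha)=\lang\,(\caha)$.

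Next we show $\lang\,(\caha)\subseteq\lang\,(\compaha)$ by induction on $\cardinality{w}$, using the relation $\binaryrel$ of Theorem~\ref{simulation argument}. The induction claim is: if $w$ labels a run $\cahainitnode=\cahanode^0\to\cahanode^1\to\dots\to\cahanode^{\cardinality{w}}$ in $\caha$, then $w$ labels a run $\hat{\cahanode}^0\to\dots\to\hat{\cahanode}^{\cardinality{w}}$ in $\compaha$ from $\hat{\cahanode}_{init}$ with $\cahanode^j\binaryrelsign\hat{\cahanode}^j$ for every $j$. The base case is condition (ii) of that theorem. The step is condition (iv), which supplies, for each transition $(\cahanode^j,\sigma,\cahanode^{j+1})\in\cahae$, a matching $(\hat{\cahanode}^j,\sigma,\hat{\cahanode}^{j+1})\in\compahae$ that preserves $\binaryrel$. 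Lemma~\ref{F lambda m well defined} guarantees that each $\hat{\cahanode}^{j+1}$ lies in $\compahav$.

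Since every node of $\compaha$ is accepting (condition (iii)), the simulated run is accepting, so $w\in\lang\,(\compaha)$. Chaining the equality from the first step with this inclusion gives the corollary.

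The only delicate point is the first step. The isomorphism must identify the initial nodes, not merely the graphs, so we would check explicitly that $\mapfunc(\ahainitnode,\weaklyhardm,\weaklyhardk)=\langle0,\dots,0\rangle$, as computed above. Everything else is routine bookkeeping.
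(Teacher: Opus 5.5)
Your proposal is correct and matches the paper's argument: the paper composes the isomorphism of Theorem~\ref{G and H isomorph} with the simulation of Theorem~\ref{simulation argument}, cites the standard fact that simulation implies trace inclusion, and notes that this coincides with language inclusion because all states are accepting. Your induction on $\cardinality{w}$ just unfolds that cited fact, and your check that $\mapfunc(\ahainitnode,\weaklyhardm,\weaklyhardk)=\cahainitnode$ is already item (i) of the isomorphism proof, so that step is not actually delicate.
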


\begin{remark}\label{rem:26}
We are later going to use the following insight. For large values of $\wordlength$, the number of accepted words of length $\wordlength$, i.e., $\cardinality{ \langlength{\lang} }$, can be approximated by an exponential function of the form $a \cdot b^{\,\wordlength}$. The values of $a$ and $b$ can be derived  from the shape of the respective automaton, and this holds for all acceptors $\mathcal G$ we consider.  

Concretely, assuming  $\mathcal G=(V,E)$ is of size $\cardinality {V} = n$, we let  $\initstatevector \in \realnumbersmatrix{1}{n}$ denote a vector representing the initial state $\ahainitnode$ (i.e.,~$\initstatevector_i = 1$ if $\ahanode_i$ is the initial state and $0$ otherwise), and $\eigenvalue$ be the largest eigenvalue. Denote with $\eigenvectory$ the eigenvector that corresponds to the eigenvalue $\eigenvalue$ in the adjacency matrix defined by the entries $\adjelement_{ij} = 1$ if $\{ (\ahanode_i, \deadlinehit, \ahanode_j) , (\ahanode_i, \deadlinemiss, \ahanode_j)\} \cap  E \neq \emptyset$, and $0$ otherwise. Let $\eigenvectorx$ be the eigenvector corresponding to the largest eigenvalue of the transposed adjacency matrix. Then, 

\begin{align} \label{card ev approx}
\cardinality{ \langlength{\lang}\,({\mathcal G}) } \approx \onenorm{ \frac{\eigenvectory^\top \initstatevector^\top}{\eigenvectory^\top \eigenvectorx} \eigenvectorx} \cdot \eigenvalue^\wordlength
\end{align}
where the $1$-norm is defined as usual. The proof is found in Appendix~\ref{sec:cardinality}.
\end{remark}

\subsection{Number of states of $\compaha$}

We move on to the second item of the problem statement, i.e., the number of states of the compressed automaton $\compaha = (\compahav, \compahae)$.
To calculate exactly the number of states of $\compaha$, i.e., $\cardinality{\compahav}$, we split the set of states into subsets.
We start by defining the subset of critical nodes, i.e., nodes from which a zero-transition will violate the constraint and thus is not allowed, $\critahav = \{\cahanode \in \compahav \mid \cahanode_{\weaklyhardm} \neq 0 \}$.
The complement set $\noncritahav = \compahav \setminus \critahav = \{\cahanode \in \compahav \mid \cahanode_{\weaklyhardm} = 0 \}$, comprises all nodes from which a zero-transition is allowed.
Clearly,
\begin{align} \label{card m}
\cardinality{\compahav} = \cardinality{\critahav} + \cardinality{\noncritahav}.
\end{align}

The cardinality of $\noncritahav$ is rather easy to determine, since by definition
\begin{align*}
\noncritahav = \{ \langle \cahanode_1, \dots, \cahanode_{\weaklyhardm} \rangle \mid 0 \leq \cahanode_i \leq \weaklyhardk-\weaklyhardm \text{ and } \cahanode_i \geq \cahanode_j \text{ for } i < j \text{ and } \cahanode_{\weaklyhardm} = 0 \}.
\end{align*}

The last element of the $\weaklyhardm$-arity tuples that constitute the node label is fixed to zero.
This means that the set $\noncritahav$ is built by selecting $\weaklyhardm - 1$ elements from a set with  $\weaklyhardk-\weaklyhardm + 1$ total elements, giving us
\begin{align} \label{card m noncrit final}
\cardinality{\noncritahav} 
= \left( \!\! {\weaklyhardk-\weaklyhardm + 1 \choose \weaklyhardm - 1} \!\! \right)
= {{\weaklyhardk - 1} \choose {\weaklyhardm - 1}}.
\end{align}
The cardinality of $\critahav$ is more complicated to determine and requires to enumerate three possible cases
, eventually giving us the following result.

\begin{theorem}\label{proof size}
The cardinality of the compressed automaton \trim{\weaklyhardm}{\weaklyhardk}{\comprate} is
\begin{equation}
\cardinality{\compahav} = {\weaklyhardk - 1 \choose \weaklyhardm - 1} + 
\begin{cases}
   %
   \weaklyhardk-\weaklyhardm \vphantom{\sum\limits_{j = 1}^{\weaklyhardk-\weaklyhardm}}
      & \quad\weaklyhardm = 1 \\[3mm]
   \sum\limits_{j = 1}^{\weaklyhardk-\weaklyhardm} \left\lceil \frac{j}{\comprate} \right\rceil
      & \quad\weaklyhardm = 2 \\[3mm]
   \sum\limits_{i = 0}^{\weaklyhardk-\weaklyhardm - 1} 
     {\weaklyhardm - 3 + i \choose i} \sum\limits_{j = 1}^{\weaklyhardk-\weaklyhardm - i} \left\lceil \frac{j}{\comprate} \right\rceil 
      & \quad\weaklyhardm > 2
\end{cases}.
\end{equation}

\begin{proof}
See Appendix~\ref{sec:size}.
\end{proof}
\end{theorem}

As an example, Figure~\ref{fig:cardinality} displays the number of states of \trim{2}{300}{\comprate} for varying $c$, all of which over-approximate the \anymiss{2}{300} constraint. The latter is considered a realistic requirement in practical contexts. 

\begin{figure}[ht]
\centering
\begin{tikzpicture}
\begin{axis}[width=0.97\textwidth, height=5.5cm,
xlabel={$\comprate$}, ylabel={$\cardinality{\compahav}$}, enlargelimits=false,
ytick={650,750,850}, grid=major, grid style={black!30,densely dashed}]
\addplot[ultra thick, no marks] table[col sep=comma, x=count, y=size_compressed] {data/cardinality.csv};
\end{axis}
\end{tikzpicture}
\caption{Illustration of the size of \trim{2}{300}{\comprate} when $\comprate \in [100,300]$.}
\label{fig:cardinality}
\end{figure}

\begin{remark} 
Since we are interested in relatively large values of $\weaklyhardk$ and small values of $\weaklyhardm$, the above result is powerful, even though the size is lower bounded by $\weaklyhardk-1 \choose \weaklyhardm-1$ which appears not far off from $\weaklyhardk \choose \weaklyhardm$. Taking $\anymiss{2}{300}$ as an example, we see that ${300 \choose 2} = 44\,850$ while  ${299 \choose 1} = 299$. This compares favorably to the other known over-approximation $\anymiss{\weaklyhardm}{\weaklyhardk-1}$, or concretely $\anymiss{2}{299}$  which has ${299 \choose 2} = 44\,551$ states. To reach the area of the plot in Figure~\ref{fig:cardinality} we would need to work with $\anymiss{2}{40}$ having 780 states. This admits a lot more deadline misses than the original constraint, thus making it difficult to verify stability. We return to this aspect in Section~\ref{sec:experiments}.
\end{remark}

\section{Empirical results}
\label{sec:experiments}

The core task the verification engine needs perform is to assert Equation~\eqref{eqn:dyn}.
In line with the third item in our problem statement, we are now looking at the practical relevance of our findings, so as to address the third aspect of our problem statement. For this, we study the following research questions.
\begin{itemize}[leftmargin=1cm]
\item [\textbf{RQ1:}] Can we solve stability problems that are beyond reach of current methods?
\item [\textbf{RQ2:}] Is the over-approximation more efficient than weakly-hard constraints at solving stability problems for a large class of systems and by how much?
\item [\textbf{RQ3:}] How does the compression rate $\comprate$ influence the verification efficiency?
\end{itemize}
Each question will be addressed by a dedicated experimental campaign.
Note that the current methods mentioned in \textbf{RQ1} imply the introduction of (different) over-approximations for tractability. On the contrary, \textbf{RQ2} focuses on the comparison between verification under exact weakly-hard constraints and our over-approximation and stresses the importance of a larger test campaign, that allows us to see if the benefits are visible on a large family of physical systems.
\textbf{RQ3} then investigate the effect of the only parameter of our over-approximation method.

\vspace{2mm}
As testbench, we use a set of $20$ plants taken from the literature. 
Specifically, we use the test plant $P_1(s) = e^{-s} / (1+s\,T)$ from a collection of control benchmarks typically used to evaluate the tuning of controllers, for further details on the test batch, see \cite[Chapter 7, Equation (7.2)]{Astrom06}.
We test the plant with the proposed different values of $T \in \{ 0.05, 0.1, 0.2, 0.3, 0.5,$ $0.7, 1, 1.3, 1.5, 2, 4, 6, 10, 20, 50, 100, 200, 500, 1000 \}$.
For each of the $20$ plants, we design a Luenberger observer with closed-loop poles chosen randomly in the interval $[0.2, 0.3]$ and a state-feedback controller that imposes closed-loop poles chosen randomly in the interval $[0.6, 0.7]$ using standard control methods~\cite{Astrom11}.
The randomness simplifies the solution of the control problem by avoiding the introduction of multiple singularities at the same location.

The combination of plant, observer, and controller allows us to calculate the matrices $\clAm$ and $\clAh$ in Equation~\eqref{eq:stability}, where $\clAh$ represents the system dynamics when a deadline is hit by the controller, and $\clAm$ encodes the behavior in case of deadline miss.\footnote{The size of $\clAh$ and of $\clAm$ varies across benchmarks, but is at most $10\!\times\!10$.} Note that the controller is specifically implemented with one-step delay, or according to the Logical Execution Time paradigm~\cite{Kirsch2011}, meaning that the control signal calculated in the $i$-th iteration is applied during the $(i+1)$-th iteration, making the spectral radius of $\clAh$ move from the original design choice. We assume that the control task is killed when a deadline is missed and that the control signal is kept constant, but the approach works for any possible miss-handling strategy.

To find bounds on the joint spectral radius, we use the branch-and-bound algorithm proposed by Gripenberg~\cite{Gripenberg96}. The algorithm is iteratively narrowing down an interval in which the joint spectral radius would lie. For obvious reasons, its lower bound corresponds to the one of the system without deadline misses, i.e., to the spectral radius of $\clAh$, $\text{sr}\,(\clAh)$. At each iteration, a new set of matrices is computed, effectively pruning paths that are identified as irrelevant. In this way, the upper bound on the joint spectral radius is lowered. As soon as the upper bound is below one, i.e., as soon as we can certify that $\text{jsr}\,(\adjm \otimes \clAm,\adjh \otimes \clAh)<1$, we stop the algorithm. Along the iterations, the matrix set grows. The overall memory footprint needed for the matrices is the major obstacle to the scalability of the stability verification.

We implement our algorithms using the open source language Julia. 
We executed our benchmarks on an HTCondor-based high-throughput computing cluster. The different experiment sets were submitted as independent batch jobs via HTCondor, enabling opportunistic scheduling across available execute nodes. Each job was provisioned with a minimum of 16$\,$CPU cores and 20$\,$GB of RAM and was dispatched only to nodes meeting or exceeding these resource constraints. All jobs were containerized and ran Julia within the Docker image \texttt{julia:1.11.4-bookworm} (Debian Bookworm base).
Due to the resource variability of running on the HTCondor cluster, when we report running times, these are just meant to be indicative. The value that determines the scalability of the verification algorithm is the memory footprint, which we measure precisely across all our experiments.

\subsection{Proving system stability beyond reach of current methods}

To illustrate the practical impact of our compression, we consider a representative closed-loop system, namely the first one in the list above ($T=0.05$) 
given by the two mode matrices $\clAh$ (deadline hit) and $\clAm$ (deadline miss).
The target specification is the realistic weakly-hard constraint \anymiss{2}{300}.
Verifying stability directly under \anymiss{2}{300} is not feasible: the corresponding automaton is too large, and the lifted matrices in the joint spectral radius procedure quickly exceed available resources.

A standard workaround is to over-approximate \anymiss{2}{300} by reducing the window size $\weaklyhardk$ to a tractable value (e.g., \anymiss{2}{k'} with $k' \ll 300$).
This keeps the automaton small, but it also admits many additional miss patterns, which typically makes the stability verification significantly harder (the upper bound on the joint spectral radius decreases slowly and memory consumption increases quickly).
We compare this classical window-reduction approach with our compression-based over-approximation.

We run the stability analysis with \trim{2}{300}{260}. The resulting automaton has 635 states. After 19 minutes and 60 iterations of the branch-and-bound procedure, the algorithm certifies stability ($\text{jsr}<1$), with a final space consumption of 15$\,$255$\,$240 stored matrix entries.
For comparison, we select window-reduced \anymiss{2}{k'} models from the literature that yield a similar number of automaton states:
\begin{enumerate}[label=(\roman*)]
\item \anymiss{2}{36} has 630 states, just 5 fewer than \trim{2}{300}{260}. After 60 iterations, the space consumption reaches 213$\,$298$\,$470 entries (more than one order of magnitude higher than \trim{2}{300}{260}), the run time is substantially longer, and the computation still does not return a stability proof: the $\text{jsr}$ upper bound remains slightly above 1. With additional computation time, the branch-and-bound algorithm reaches 62 iterations, giving a stability proof, with a total of 276$\,$723$\,$090 matrix entries.
\item \anymiss{2}{37} has 666 states, 31 more than \trim{2}{300}{260}, and therefore could in principle capture more relevant behaviors. In practice, it remains expensive: after 60 iterations, space consumption is still 198$\,$563$\,$904 matrix entries (one order of magnitude above our method), and the algorithm again does not finish with a stability proof. The computation actually completes in roughly three hours, after 62 iterations, with a total of 294$\,$596$\,$442 matrix entries.
\end{enumerate}
Finally, we note that overly aggressive compression can also introduce enough additional behaviors to increase verification cost. For example, \trim{2}{300}{298} (stronger compression, 597 states) still finishes within 60 iterations (precisely at iteration 59), but with a much larger space consumption of 143$\,$506$\,$860 entries. This supports the intended interpretation of the compression parameter: increasing compression reduces the automaton size, and simultaneously admit extra failure patterns.
We discuss this further in Section~\ref{sec:experiments:comprate}.

These results indicate that, for essentially the same automaton size, our approach yields a much tighter over-approximation for stability verification. It retains the execution fragments that matter for the $\text{jsr}$ computation while avoiding the large number of spurious miss patterns introduced by window reduction. In other words, the classical approximations based on lowering the window for the weakly-hard any miss constraint make the language overly permissive in ways that are worse for the stability verification.

\begin{remark}
Remark~\ref{rem:26} provides an interesting quantitative perspective on our findings, if applying to the matrices we considered. We have that
\begin{equation*}
\begin{array}{rcl}
\cardinality{\langlength{\lang}\,(\anymiss{2}{36})} &\approx& \phantom{00}7.053 \cdot 1.151^{\ell} \\
\cardinality{ \langlength{\lang}\,(\anymiss{2}{37})} &\approx& \phantom{00}7.240 \cdot 1.148^{\ell} \\
\cardinality{ \langlength{\lang}\,(\anymiss{2}{300})} &\approx& \phantom{0}69.738 \cdot 1.028^{\ell} \\
\cardinality{ \langlength{\lang}\,(\trim{2}{300}{260})} &\approx& 109.224 \cdot 1.030^{\ell} \\
\end{array}
\end{equation*}
Note that the cardinality of $\trim{2}{300}{260}$ 
grows much slower than those of its competitors of similar size and is very close to the cardinality of the original constraint \anymiss{2}{300}.  Figure~\ref{fig:languagecardinality} shows how the languages grow with $\wordlength$, hinting at the root cause of the superior effectiveness of the approach we present.
\end{remark}

\begin{figure}[ht]
\centering
\begin{tikzpicture}
\begin{axis}[width=0.97\textwidth, height=5.5cm,
xmin=1, ymin=1, xmax=10000, domain=0:10000,
xlabel={$\wordlength$}, ylabel={$\cardinality{\langlength{\lang}(\cdot)}$}, grid=major, grid style={black!30,densely dashed},
legend style={draw=none,at={(0.5,1)},anchor=south},
legend columns = 2,
ymode=log,
scaled ticks=false,
tick label style={
    /pgf/number format/.cd,
    use comma=false,
    1000 sep={},
  },
]
\addplot[ultra thick, blue, no marks] {7.052704882849246*pow(1.1509294191604074,x)}; 
\addlegendentry{\anymiss{2}{36}}
\addplot[ultra thick, blue, densely dashed, no marks] {7.240425309601088*pow(1.147812338510838,x)}; 
\addlegendentry{\anymiss{2}{37}}
\addplot[ultra thick, densely dotted, blue, no marks]  {69.73800659503162*pow(1.0283046665199689,x)}; 
\addlegendentry{\anymiss{2}{300}}

\addplot[ultra thick, solid, no marks] {109.2244702586909*pow(1.03075877855802,x)};
\addlegendentry{\trim{2}{300}{260}}
\end{axis}
\end{tikzpicture}
\caption{Illustration of the language cardinality growths for increasing word length.}
\label{fig:languagecardinality}
\end{figure}

\subsection{Comparing $\aha$ and $\compaha$} 

\begin{table}[t]
\centering
\caption{Comparison between verification under \anymiss{2}{50} and \trim{2}{50}{45}.}
\label{tab:experimental_results}
\scriptsize
\def\arraystretch{1.25}%
\setlength\tabcolsep{3mm}
\begin{tabular}{@{}r|c|ccr|ccr@{}}
\hline\hline
   &                            & \multicolumn{3}{c|}{\anymiss{2}{50}~~~~~$\cardinality{\ahav} = 1225$} & \multicolumn{3}{c}{\trim{2}{50}{45}~~~~~$\cardinality{\compahav} = 100$} \\
\# & $\text{sr}\,(\clAh)$ & iter. & space & time & iter. & space & time \\
\hline\hline
 1 & 0.8346 & 57 & \phantom{0}210471660000 &   2h 20m & 42 & \phantom{0}3630960000 & 13m \\
 2 & 0.7713 & 36 & \phantom{00}38031840000 &      10m & 29 & \phantom{0}1103760000 & 2m \\
 3 & 0.7346 & 35 & \phantom{00}36033007500 &       9m & 27 & \phantom{00}875160000 & 2m \\ 
 4 & 0.7870 & 42 & \phantom{00}51159307500 &      19m & 31 & \phantom{0}1369800000 & 3m \\
 5 & 0.8491 & 63 & \phantom{0}587008485000 &   2h 50m & 46 & \phantom{0}4849200000 & 20m \\
 6 & 0.8374 & 58 & \phantom{0}250448310000 &   2h 29m & 42 & \phantom{0}3630960000 & 15m \\
 7 & 0.7498 & 36 & \phantom{00}38031840000 &      11m & 27 & \phantom{00}875160000 & 2m \\
 8 & 0.6881 & 29 & \phantom{00}25174485000 &       5m & 23 & \phantom{00}518760000 & 1m \\
 9 & 0.7084 & 28 & \phantom{00}23553810000 &       4m & 22 & \phantom{00}448560000 & 1m \\
10 & 0.7309 & 30 & \phantom{00}26849182500 &       6m & 24 & \phantom{00}596160000 & 1m \\
11 & 0.7999 & 43 & \phantom{00}53536297500 &      24m & 29 & \phantom{0}1103760000 & 2m \\
12 & 0.8197 & 50 & \phantom{00}74334960000 &      41m & 36 & \phantom{0}2217600000 & 6m \\
13 & 0.8317 & 56 & \phantom{0}177085755000 &   1h 11m & 38 & \phantom{0}2637360000 & 9m \\
14 & 0.8401 & 59 & \phantom{0}297988110000 &   1h 39m & 40 & \phantom{0}3107520000 & 10m \\
15 & 0.8558 & 66 & \phantom{0}946906380000 &    5h 6m & 50 & \phantom{0}6382800000 & 43m \\
16 & 0.8658 & 70 &           1706354685000 &   9h 46m & 54 & \phantom{0}8953920000 & 51m \\
17 & 0.8653 & 74 &           2910516210000 &  16h 54m & 59 &           16288920000 & 1h 13m \\
18 & 0.8705 & 80 &           5913626985000 &  32h 12m & 59 &           16288920000 & 1h 9m \\
19 & 0.8684 & 79 &           5291287785000 &  27h 32m & 63 &           31622040000 & 2h 6m \\
20 & 0.8670 & 83 &           8132331060000 &  49h 55m & 64 &           37888920000 & 2h 14m \\
\hline\hline
\end{tabular}
\end{table}

We now focus on \textbf{RQ2}, asking if our method is more efficient than using the classic weakly-hard framework, and if so by what magnitude.

We choose the constraint $\anymiss{2}{50}$ and its compression \trim{2}{50}{45}. 
The reason to choose in particular this constraint and this trim is that we expect that such a small number of deadline misses implies that the closed-loop system is stable, and a window $\weaklyhardk=50$ allows us to still find upper bounds on the joint spectral radius with the technique presented in~\cite{Vreman22b}, even though this may require a large amount of time.
The compression factor $\comprate=45$ is one of the possible choices, and \textbf{RQ3}, studied below, is devoted to the choice of $\comprate$. The adjacency matrices of \anymiss{2}{50} and \trim{2}{50}{45} have respectively $1225$ and $100$ states.

Table~\ref{tab:experimental_results} shows the spectral radius of the closed-loop system without deadline misses, the number of iterations that are needed to bring the upper bound from the starting value to a value that is below one, the space complexity (i.e., the number of matrix entries involved in the last iteration of the branch-and-bound algorithm), and the time it takes to get an answer for each of the benchmarks.
Unsurprisingly, if the closed-loop system is intrinsically more robust (lower $\text{sr}\,(\clAh)$), it is also easier to verify stability under sporadic deadline misses.
As can be seen \trim{\weaklyhardm}{\weaklyhardk}{\comprate} is clearly very effective in significantly reducing the computation time as well as the memory footprint needed to get to an answer. 

\subsection{The relation between $\comprate$ and efficiency}
\label{sec:experiments:comprate}

We now turn to studying \textbf{RQ3} how the compression factor $\comprate$ trades off automaton size against the practical resource usage of the stability-verification workflow. To isolate the effect of $\comprate$, we fix the plant and controller instance to system \#8 in Table~\ref{tab:experimental_results} and keep the weakly-hard parameters as $m = 2, k = 300$, varying only $\comprate$ in \trim{2}{300}{\comprate}. For each value of $\comprate$, we run the verification procedure using the branch-and-bound algorithm and record the memory usage at the last iteration, i.e., the number of matrix entries stored in the final iteration that produces the stability verdict. The resulting measurements are shown in Figure~\ref{fig:spacecompression}.

\begin{figure}[ht]
\centering
\begin{tikzpicture}
\begin{axis}[width=0.97\textwidth, height=5.5cm, ymin=4, ymax=20,
ytick={5,7,9,11,13,15,17}, legend pos=north west,
xlabel={$\comprate$}, ylabel={space}, enlargelimits=false, grid=major, grid style={black!30,densely dashed}]
\addplot[ultra thick, no marks] table[col sep=comma, x=count, y=spacediv] {data/comprate.csv};
\addlegendentry{number of matrix entries / 1$\,$000$\,$000$\,$000}
\end{axis}
\end{tikzpicture}
\caption{Illustration of the space needed for \trim{2}{300}{\comprate} when $\comprate \in [100,300]$ with system \#8 from Table~\ref{tab:experimental_results}.}
\label{fig:spacecompression}
\end{figure}

Increasing $\comprate$ reduces the size of the automaton (see Figure~\ref{fig:cardinality}), and hence shrinks the needed matrices and the per-iteration footprint: for moderate-to-large compression factors, the memory footprint decreases as $\comprate$ increases, consistent with the fact that higher $\comprate$ yields a smaller automaton. However, Figure~\ref{fig:spacecompression} also shows that this monotonicity intuition breaks at the high end: with very aggressive compression (near the upper end of the possible range), the overall memory footprint is increased massively. 
The reason is that a drastic compression factor makes the automaton structure be considerably more permissive to spurious patterns, that are costly to verify.
Notably, the original system \#8 is very robust, since the spectral radius of its closed-loop system without deadline misses $\text{sr}\,(\clAh)$ is the furthest below one. Therefore even a very imprecise over-approximation of $\anymiss{2}{300}$ can easily be certified as asymptotically stable, but the joint-spectral-radius engine is forced to retain a considerably larger set of intermediate matrices before it can certify that the joint spectral radius is below~1, thereby inflating the total memory footprint, despite the smaller automaton. For systems that are originally closer to instability ($\text{sr}\,(\clAh)$ closer to 1) the system might become unverifiable, just because it is not stable given the relaxed constraint. This effect is similar to the behaviour we observed for \anymiss{\weaklyhardm}{\weaklyhardk'} for $\weaklyhardk'\ll \weaklyhardk$ when studying \textbf{RQ1}.

\vspace{-2mm}
\section{Conclusion}
\label{sec:conclusion}

This paper has tackled the main scalability bottleneck in verifying control-system stability under realistic weakly-hard deadline-miss patterns: the combinatorial explosion of minimal constraint automata at industrially-relevant window sizes. We have proposed a sound over-approximation that replaces the original automaton with a compressed acceptor, \trim{\weaklyhardm}{\weaklyhardk}{\comprate}, in which the structured merging of critical states is controlled by the compression factor $\comprate$. We have proven that the compressed automaton simulates the original, guaranteeing language inclusion and thus preserving safety soundness. We have characterized how compression affects automaton size and the language accepted, making the tractability-conservatism trade-off explicit.
An extensive experimental campaign has demonstrated the advance over the state-of-the-art. 

There is a peculiar benefit in the method we propose. For a given verification target $\anymiss{m}{k}$, it is advisable to sweep the parameter space of $\trim{m}{k}{c}$ either analytically, based on Remark~\ref{rem:26}, or empirically, as we have done in Section~\ref{sec:experiments:comprate} to arrive at the best performing over-approximation in a controlled setting, then to be used to verify stability of challenging controller designs. The foreseeable gain over the current state-of-the-practice should be at least an order of magnitude in memory footprint and time.

\bigskip 

\noindent\textbf{Acknowledgements.}
This work was partially funded by DFG grant 389792660 as part of \href{https://perspicuous-computing.science}{TRR~248 -- CPEC} and by the EU INTERREG North Sea project STORM\_SAFE of the European Regional Development Fund.

\section*{Data-Availability Statement}
The code for the experiments presented in this paper is available at \url{https://zenodo.org/records/19710440}. The SHA265 checksum of the artifact is \texttt{d9cbfb9283cb529c13893d7ad5cbd0a99cf679ab9e59de25ff006f729ea601be}.

\bibliographystyle{plain}
\bibliography{main}

\appendix

\section{The structure of $\compaha$}
\label{sec:proofs}

\subsection*{Proof of Theorem~\ref{f bijective}}
\label{proof f bijective}

\begin{proof}
First, we show that the function is well-defined. This means that $\forall \ahanode_{k:1} = \ahanode \in \ahav, \mapfunc(\ahanode_{k:1}, \weaklyhardm, \weaklyhardk) \in \cahav$. Given the constraint \anymiss{\weaklyhardm}{\weaklyhardk} and its minimal automaton (including $\digitx$ where states are merged), for each $\ahanode_{k:1}$ there exist $\weaklyhardm$ different indices $(\indextuple_1, \dots, \indextuple_{\weaklyhardm})$ for which $\ahanode_p = 0 | \digitx$. Without loss of generality, assume that the indices are sorted in ascending order, i.e., that $i < j \implies \indextuple_i < \indextuple_j$. Let $\mapfunc(\ahanode_{k:1}, \weaklyhardm, \weaklyhardk) = \langle \cahanode_1, \dots, \cahanode_{\weaklyhardm} \rangle$.
\begin{enumerate} [label = \emph{(\roman*)}]
\item
Since the indices in the tuple are strictly ascending, the tuple with the highest possible indices is $(\weaklyhardk-\weaklyhardm + 1, \weaklyhardk-\weaklyhardm + 2, \dots, \weaklyhardk)$. Therefore, the $i$-th element of the tuple is always less than or equal to $\weaklyhardk-\weaklyhardm + i$, i.e., $\indexfunc(\ahanode_{k:1}, i) \leq \weaklyhardk-\weaklyhardm + i$. Then
\begin{align} \label{f well-defined i}
\cahanode_i = \weaklyhardk-\weaklyhardm + i - \indexfunc(\ahanode_{k:1}, i) \geq 0.
\end{align}

\item
The tuple with the smallest possible indices is $(1, 2, \dots, \weaklyhardm)$. Thus, the $i$-th element of the tuple is always larger than or equal to $i$, i.e., $\indexfunc(\ahanode_{k:1}, i) \geq i$, and therefore
\begin{align} \label{f well-defined ii}
\cahanode_i = \weaklyhardk-\weaklyhardm + i - \indexfunc(\ahanode_{k:1}, i) \leq \weaklyhardk-\weaklyhardm.
\end{align}

\item
If we consider the $i$-th and $j$-th element of the tuple, and assuming without loss of generality that $i < j$, the difference between these elements has to be at least $j - i$, i.e., $\indexfunc(\ahanode_{k:1}, j) - \indexfunc(\ahanode_{k:1}, i) \geq j - i$. Hence,
\begin{align} \label{f well-defined iii}
\cahanode_i = \weaklyhardk-\weaklyhardm + i - \indexfunc(\ahanode_{k:1}, i)  \geq \weaklyhardk-\weaklyhardm + j - \indexfunc(\ahanode_{k:1}, j) = \cahanode_j.
\end{align}

\end{enumerate}

We then show that the function is injective.
Let $\ahanode_{k:1}, \ahanode_{k:1}' \in \ahav$ and $\ahanode_{k:1} \neq \ahanode_{k:1}'$. Since $\aha = (\ahav, \ahae)$ is minimal, $\ahanode_{k:1} \not\equiv \ahanode_{k:1}'$, and there exists an index $p \in \{ 1, \dots, \weaklyhardk \}$ such that $\ahanode_p \neq \ahanode'_p$. Without loss of generality, assume that $\ahanode_p = 0$ and $\ahanode'_p = 1$. This implies that $\exists i \in \{ 1, \dots, \weaklyhardm \} \,\mid\, \indexfunc(\ahanode_{k:1}, i) = p$, and $\nexists j \in \{ 1, \dots, \weaklyhardm \} \,\mid\, \indexfunc(\ahanode_{k:1}', j) = p$.
Thus, $\indexfunc(\ahanode_{k:1}, i) \neq \indexfunc(\ahanode_{k:1}', i)$ and hence $\cahanode_i = \weaklyhardk-\weaklyhardm + i - \indexfunc(\ahanode, i) \neq \weaklyhardk-\weaklyhardm + i - \indexfunc(\ahanode', i) = \cahanode'_i$ and $\mapfunc(\ahanode) \neq \mapfunc(\ahanode')$.

Finally, we can show that the function is bijective.
The cardinalities of the domain and codomain of $\mapfunc$ are equal. $\ahav$ contains $\nicefrac{\weaklyhardk!}{(\weaklyhardk-\weaklyhardm)!\,(\weaklyhardm)!}$ elements~\cite{Vreman22}, and in $\cahav$ each element corresponds to an $\weaklyhardm$-combination with repetition of a set of size $\weaklyhardk-\weaklyhardm + 1$, which leads to the same cardinality. The bijectivity of $\mapfunc$ follows. 
\end{proof}

\subsection*{Proof of Lemma~\ref{F lambda well defined}}
\label{proof F lambda well defined}

\begin{proof}
Let $\cahaedge = (\cahanode, \alphabetletter, \cahanode') \in \cahae$ with $\cahanode = \langle \cahanode_1, \dots, \cahanode_{\weaklyhardm} \rangle$ and $\cahanode' = \langle \cahanode'_1, \dots, \cahanode'_{\weaklyhardm} \rangle$.

First, let $\cahaedge$ be in the first set of the union, then $\cahanode_i' = \max(\cahanode_i - 1, 0)$ and $\alphabetletter = \deadlinehit$. Obviously, $\cahanode \in \cahav$ and $\alphabetletter \in \alphabet$. Since $\cahanode \in \cahav$, we have $0 \leq \cahanode_i \leq \weaklyhardk-\weaklyhardm$. This implies $0 \leq \max(\cahanode_i - 1, 0) = \cahanode_i' \leq \weaklyhardk-\weaklyhardm$. Furthermore, because of $\cahanode_i \geq \cahanode_j$ for $i < j$, it follows that $\cahanode_i' = \max(\cahanode_i - 1, 0) \geq \max(\cahanode_j - 1, 0) = \cahanode_j'$ for $i < j$.

Next, let $\cahaedge$ be in the second set of the union, then $\cahanode' = \langle \weaklyhardk-\weaklyhardm, \cahanode_1, \dots, \cahanode_{\weaklyhardm - 1} \rangle$ and $\alphabetletter = \deadlinemiss$. Again, $\cahanode \in \cahav$ and $\alphabetletter \in \alphabet$ follow directly. For the first element of $\cahanode'$, $0 \leq \cahanode'_1 \leq \weaklyhardk-\weaklyhardm$ is valid because $\cahanode'_1 = \weaklyhardk-\weaklyhardm$. For the other elements, this holds because of $\cahanode'_i = \cahanode_{i - 1}$ and $0 \leq \cahanode_i \leq \weaklyhardk-\weaklyhardm$ since $\cahanode \in \cahav$. In addition, for the first two elements of $\cahanode'$, we have $\cahanode'_1 = \weaklyhardk-\weaklyhardm \geq \cahanode_1 = \cahanode'_2$. For the other elements, $\cahanode'_i \geq \cahanode'_j$ for $i < j$ applies following the same argument as above, $\cahanode'_i = \cahanode_{i - 1}$ and $\cahanode \in \cahav$. Therefore, in both cases, we have $\cahanode' \in \cahav$.
\end{proof}

\subsection*{Proof of Theorem~\ref{G and H isomorph}}
\label{isomorphism proof}

\begin{proof}
We show that there exists a bijective function $\bijfunc \! : \ahav \rightarrow \cahav$ such that
\begin{enumerate} [label = \emph{(\roman*)}]
\item{$\bijfunc(\ahainitnode) = \cahainitnode$}
\item{$\forall \, \ahanode, \ahanode' \in \ahav$ and $\alphabetletter \in \alphabet \! : (\ahanode, \alphabetletter, \ahanode') \in \ahae$ if and only if $(\bijfunc(\ahanode), \alphabetletter, \bijfunc(\ahanode')) \in \cahae$}
\item{$\ahanode$ is accepting if and only if $\bijfunc(\ahanode)$ is accepting}
\end{enumerate}
In Theorem \ref{f bijective} we already showed that the node mapping function with a fixed value for $\weaklyhardm$ and $\weaklyhardk$, $\mapfunc: \ahav \rightarrow \cahav$, is a bijective function.
\begin{enumerate} [label = \emph{(\roman*)}]
\item{
The application of the function $\indexfunc$ from Definition \ref{def indexfunc} to the initial node yields $\indexfunc(\ahanode_{init, \weaklyhardk:1}, i) = \indexfunc({\digitx^{\weaklyhardm} \deadlinehit^{\weaklyhardk - \weaklyhardm}}_{\weaklyhardk:1}, i) = \weaklyhardk - \weaklyhardm + i$. Applying $\mapfunc$ leads to $\cahanode_i = \weaklyhardk - \weaklyhardm + i - \indexfunc(\ahainitnodearg{\weaklyhardk:1}, i) = 0$, hence $\mapfunc(\ahainitnode) = \langle 0, \dots, 0 \rangle = \cahainitnode$.
}
\item{
Let $\ahanode = \ahanode_{\weaklyhardk:1} \in \ahav$ and $\mapfunc(\ahanode) = \langle \cahanode_1, \dots, \cahanode_{\weaklyhardm} \rangle = \cahanode$. Further, $\ahanode' = \ahanode'_{\weaklyhardk:1} = \ahanode_{\weaklyhardk - 1:1} \alphabetletter$ and $\mapfunc(\ahanode') = \langle \cahanode'_1, \dots, \cahanode'_{\weaklyhardm} \rangle = \cahanode'$.

First, assume that $\alphabetletter = \deadlinemiss$ and $(\ahanode, \deadlinemiss, \ahanode') \in \ahae$. Since $\ahanode$ concatenated by a zero is still in $\ahav$, $\ahanode_{\weaklyhardk - 1 : 1}$ has at most $\weaklyhardm - 1$ zeros and $\ahanode_\weaklyhardk = \digitx$. Therefore, $\indexfunc(\ahanode, \weaklyhardm) = \weaklyhardk$ and we obtain
\begin{align*}
\cahanode_{\weaklyhardm} = \weaklyhardk - \weaklyhardm + \weaklyhardm - \weaklyhardk = 0.
\end{align*}
Now, consider the values of $\indexfunc(\ahanode', i)$. The zero which is concatenated to $\ahanode$ implies $\indexfunc(\ahanode', 1) = 1$. Further, all existing zeros in $\ahanode$ are moved up by one position. Therefore, the index of the $i$-th last zero in $\ahanode'$, $i > 1$, is the index of the $(i - 1)$-th last zero in $\ahanode$ increased by one, i.e., $\indexfunc(\ahanode', i) = \indexfunc(\ahanode, i - 1) + 1$. We obtain
\begin{align*}
&\cahanode'_1 = \weaklyhardk - \weaklyhardm + 1 - \indexfunc(\ahanode', 1) = \weaklyhardk - \weaklyhardm \\
&\cahanode'_i = \weaklyhardk - \weaklyhardm + i - \indexfunc(\ahanode', i)
= \weaklyhardk - \weaklyhardm + i - (\indexfunc(\ahanode, i - 1) + 1)
= \cahanode_{i - 1} \quad \text{for } i > 0.
\end{align*}
It follows that $(\cahanode, \deadlinemiss, \cahanode') \in \cahae$.

Next, assume that $\alphabetletter = \deadlinemiss$ and there exists no node $\ahanode'' \in \ahav$ such that $(\ahanode, 0, \ahanode'') \in \ahae$. Hence, $\ahanode$ concatenated by a zero is not contained in $\ahav$. This implies that $\ahanode_{\weaklyhardk - 1:1}$ already contains $\weaklyhardm$ zeros and $\ahanode_{\weaklyhardk} \ne \deadlinemiss$. Therefore, 
$\indexfunc(\ahanode, i) < \weaklyhardk$, especially for $i = \weaklyhardm$, and we obtain
\begin{align*}
\cahanode_{\weaklyhardm} 
= \weaklyhardk - \weaklyhardm + \weaklyhardm - \indexfunc(\ahanode, \weaklyhardm)
> \weaklyhardk - \weaklyhardk
= 0.
\end{align*}
Hence, there does not exist a node $\cahanode'' \in \cahav$ such that $(\cahanode, \deadlinemiss, \cahanode'') \in \cahae$.

Finally, let $\alphabetletter = \deadlinehit$, then $(\ahanode, \deadlinehit, \ahanode') \in \ahae$. Since $\ahanode'$ is obtained by concatenating a one to $\ahanode$, the zeros in $\ahanode$ are moved up by one position. The index of the $i$-th last zero in $\ahanode'$ is the index of the $i$-th last zero in $\ahanode$ increased by one. However, increasing the indices is only possible if they have not reached their maximum value yet, i.e., $\ahanode$ has no leading zeros. The leading zeros have to be kept to ensure that $\ahanode'$ still consists of $\weaklyhardk-\weaklyhardm$ ones and $\weaklyhardm$ zeros. Therefore, the index of the $\weaklyhardm$-th zero must not be larger than $\weaklyhardk$, the index of the $(\weaklyhardm - 1)$-th zero not larger than $\weaklyhardk - 1$ etc., i.e., $\indexfunc(\ahanode', i) = \min( \indexfunc(\ahanode, i) + 1, \weaklyhardk-\weaklyhardm + i )$. We obtain
\begin{align*}
\cahanode'_i &= \weaklyhardk - \weaklyhardm + i - \indexfunc(\ahanode', i) = \weaklyhardk - \weaklyhardm + i - \min( \indexfunc(\ahanode, i) + 1, \weaklyhardk - \weaklyhardm + i ) \\
&= \weaklyhardk - \weaklyhardm + i + \max( -\indexfunc(\ahanode, i) - 1, - \weaklyhardk + \weaklyhardm - i ) \\
&= \max(\weaklyhardk - \weaklyhardm + i - \indexfunc(\ahanode, i) - 1, \weaklyhardk - \weaklyhardm + i - \weaklyhardk + \weaklyhardm - i) = \max(\cahanode_i - 1, 0).
\end{align*}
It follows that $(\cahanode, \deadlinehit, \cahanode') \in \cahae$.
}
\item{Since all nodes $\ahanode \in \ahav$ as well as all nodes $\cahanode \in \cahav$ are accepting, this condition is satisfied.}
\end{enumerate}
\end{proof}

\subsection*{Proof of Lemma~\ref{F lambda m well defined}}
\label{proof F lambda m well defined}

\begin{proof}
Let $\cahaedge = (\cahanode, \alphabetletter, \cahanode') \in \compahae$ with $\cahanode = \langle \cahanode_1, \dots, \cahanode_{\weaklyhardm} \rangle$ and $\cahanode' = \langle \cahanode'_1, \dots, \cahanode'_{\weaklyhardm} \rangle$.

First, let $\cahaedge$ be in the first set of the union, then $\cahanode_i' = \max(\cahanode_i - 1, 0)$ and $\alphabetletter = \deadlinehit$. Obviously, $\cahanode \in \compahav$ and $\alphabetletter \in \alphabet$. It remains to show that $\cahanode' \in \compahav$.
\begin{enumerate} [label = \emph{(\roman*)}]
\item{
Since $\cahanode \in \compahav$, $0 \leq \cahanode_i \leq \weaklyhardk-\weaklyhardm$ holds. This implies $0 \leq \max(\cahanode_i - 1, 0) = \cahanode_i' \leq \weaklyhardk-\weaklyhardm$.
}
\item{
Further, $\cahanode_i \geq \cahanode_j$ for $i < j$. This leads to $\cahanode_i' = \max(\cahanode_i - 1, 0) \geq \max(\cahanode_j - 1, 0) = \cahanode_j'$ for $i < j$.
}
\item{
Finally, at least one of the conditions $(\cahanode_1 - \cahanode_2) \text{ mod } \comprate = 0$ and $\cahanode_{\weaklyhardm} = 0$ is valid for $\cahanode$. If $\cahanode_{\weaklyhardm} = 0$ holds, we have
\begin{align*}
\cahanode'_{\weaklyhardm} = \max(\cahanode_{\weaklyhardm} - 1, 0) = \max(-1, 0) = 0.
\end{align*}
If $(\cahanode_1 - \cahanode_2) \text{ mod } \comprate = 0$, we can assume that $\cahanode_{\weaklyhardm} > 0$ (otherwise consider the first case). Since $\cahanode_1 \geq \cahanode_2 \geq \cahanode_{\weaklyhardm}$, also $\cahanode_1 > 0$ and $\cahanode_2 > 0$, and thus $\max(\cahanode_1 - 1, 0) = \cahanode_1 - 1$ and $\max(\cahanode_2 - 1, 0) = \cahanode_2 - 1$. We obtain
\begin{align*}
(\cahanode'_1 - \cahanode'_2) \text{ mod } \comprate &= (\max(\cahanode_1 - 1, 0) - \max(\cahanode_2 - 1, 0)) \text{ mod } \comprate \\
&= (\cahanode_1 - 1 - \cahanode_2 + 1) \text{ mod } \comprate \\
& = (\cahanode_1 - \cahanode_2) \text{ mod } \comprate \\
&= 0.
\end{align*}
}
\end{enumerate}
Next, let $\cahaedge$ be in the second set of the union, then $\cahanode' = \langle \cahanode_0, \cahanode_1, \dots, \cahanode_{\weaklyhardm - 1} \rangle$ and $\alphabetletter = \deadlinemiss$. Again, $\cahanode \in \compahav$ and $\alphabetletter \in \alphabet$ follow directly, and we show that $\cahanode' \in \compahav$.
\begin{enumerate} [label = \emph{(\roman*)}]
\item{
For $\cahanode_0 = \weaklyhardk-\weaklyhardm$, the inequality $0 \leq \cahanode_0 = \cahanode'_1 \leq \weaklyhardk-\weaklyhardm$ is obviously valid. For $\cahanode_0 = \weaklyhardk-\weaklyhardm - (\weaklyhardk-\weaklyhardm - \cahanode_1) \text{ mod } \comprate$, we first show that $(\weaklyhardk-\weaklyhardm - \cahanode_1) \text{ mod } \comprate \leq \weaklyhardk-\weaklyhardm - \cahanode_1$. Note that $\comprate > 0$ and $\weaklyhardk-\weaklyhardm - \cahanode_1 \geq 0$. If $\comprate \leq \weaklyhardk-\weaklyhardm - \cahanode_1$, then $(\weaklyhardk-\weaklyhardm - \cahanode_1) \text{ mod } \comprate < \comprate$, and thus also $(\weaklyhardk-\weaklyhardm - \cahanode_1) \text{ mod } \comprate < \weaklyhardk-\weaklyhardm - \cahanode_1$. Otherwise, if $\comprate > (\weaklyhardk-\weaklyhardm - \cahanode_1)$, then $(\weaklyhardk-\weaklyhardm - \cahanode_1) \text{ mod } \comprate = \weaklyhardk-\weaklyhardm - \cahanode_1$. Using this, we continue
\begin{alignat}{3} \label{modulo leq applied}
& & \weaklyhardk-\weaklyhardm - \cahanode_1 &\geq &(\weaklyhardk-\weaklyhardm - \cahanode_1) \text{ mod } \comprate &\geq 0 \nonumber \\
&\Leftrightarrow \quad & - \, \weaklyhardk+\weaklyhardm + \cahanode_1 &\leq & - \, (\weaklyhardk-\weaklyhardm - \cahanode_1) \text{ mod } \comprate &\leq 0 \nonumber \\
&\Leftrightarrow \quad & \cahanode_1 &\leq{} &\weaklyhardk-\weaklyhardm - (\weaklyhardk-\weaklyhardm - \cahanode_1) \text{ mod } \comprate &\leq \weaklyhardk-\weaklyhardm.
\end{alignat}
Since $\cahanode_1 \geq 0$, the inequality $0 \leq \cahanode_0 = \cahanode'_1 \leq \weaklyhardk-\weaklyhardm$ is also valid in the second case. For $i > 1$, the inequality follows from $\cahanode'_i = \cahanode_{i  - 1}$ and $\cahanode \in \compahav$.
}
\item{
For $i > 1$, $\cahanode'_i \geq \cahanode'_j$ for $i < j$ follows using the same argument that $\cahanode'_i = \cahanode_{i  - 1}$ and $\cahanode \in \compahav$. For $i = 1$, if $\cahanode'_1 = \weaklyhardk-\weaklyhardm$, this is obviously true because $\weaklyhardk-\weaklyhardm \geq \cahanode'_2$. If $\cahanode'_1 = \weaklyhardk-\weaklyhardm - (\weaklyhardk-\weaklyhardm - \cahanode_1) \text{ mod } \comprate$, using Inequality \eqref{modulo leq applied} combined with $\cahanode'_2 = \cahanode_1$ leads to the desired result.
}
\item{
If $\cahanode'_{\weaklyhardm} = 0$, then obviously $\cahanode' \in \compahav$. Otherwise, if $\cahanode'_{\weaklyhardm} = \cahanode_{\weaklyhardm - 1} > 0$, we have $\cahanode'_1 = \weaklyhardk-\weaklyhardm - (\weaklyhardk-\weaklyhardm - \cahanode_1) \text{ mod } \comprate$ and obtain
\begin{align*}
(\cahanode'_1 &- \cahanode'_2) \text{ mod } \comprate = (\weaklyhardk-\weaklyhardm - (\weaklyhardk-\weaklyhardm - \cahanode_1) \text{ mod } \comprate - \cahanode_1) \text{ mod } \comprate \\
&= ((\weaklyhardk-\weaklyhardm - \cahanode_1) - (\weaklyhardk-\weaklyhardm - \cahanode_1) \text{ mod } \comprate) \text{ mod } \comprate \\
&= \Big( \Big \lfloor \frac{\weaklyhardk-\weaklyhardm - \cahanode_1}{\comprate} \Big \rfloor \, \comprate + (\weaklyhardk-\weaklyhardm - \cahanode_1) \text{ mod } \comprate - (\weaklyhardk-\weaklyhardm - \cahanode_1) \text{ mod } \comprate \Big) \! \text{ mod } \! \comprate \\
&= \Big ( \Big \lfloor \frac{\weaklyhardk-\weaklyhardm - \cahanode_1}{\comprate} \Big \rfloor \, \comprate \Big) \text{ mod } \comprate \\
&= 0.
\end{align*}
}
\end{enumerate}
\end{proof}

\section{The size of $\compaha$}
\label{sec:size}

\subsection*{Proof of Theorem~\ref{proof size}}

As discussed on the body of the paper, $
\cardinality{\compahav} = \cardinality{\critahav} + \cardinality{\noncritahav}$ and 
\begin{align} \label{card m noncrit final}
\cardinality{\noncritahav} 
= {{\weaklyhardk - 1} \choose {\weaklyhardm - 1}}.
\end{align}
Inserting the definition of $\compahav$, we obtain
\begin{align*}
\critahav = \{ \langle \cahanode_1, \dots, \cahanode_{\weaklyhardm} \rangle \mid & \,\, 0 \leq \cahanode_i \leq x \text{ and } \cahanode_i \geq \cahanode_j \text{ for } i < j \\
&\text{ and } (\cahanode_1 - \cahanode_2) \text{ mod } \comprate = 0 \text{ and } \cahanode_{\weaklyhardm} \neq 0 \} \\
= \{ \langle \cahanode_1, \dots, \cahanode_{\weaklyhardm} \rangle \mid & \,\, 0 < \cahanode_i \leq \weaklyhardk-\weaklyhardm \text{ and } \cahanode_i \geq \cahanode_j \text{ for } i < j \\
&\text{ and } (\cahanode_1 - \cahanode_2) \text{ mod } \comprate = 0 \}.
\end{align*}
The equality of both sets can be seen as follows. The conditions $\cahanode_{\weaklyhardm} \neq 0$ and $\cahanode_{\weaklyhardm} \geq 0$ are equivalent to $\cahanode_{\weaklyhardm} > 0$, thus the last element of the tuple is strictly positive. Furthermore, since the elements of the tuple are in descending order, all elements of the tuple are strictly positive, leading to $\cahanode_i > 0$.

We split this in separate cases:
\begin{enumerate}[label=\emph{(\roman*)}]
\item When $\weaklyhardm = 1$
\begin{equation} \label{card m crit 1 final}
\critahav = \{ \langle \cahanode_1 \rangle \mid 0 < \cahanode_1 \leq \weaklyhardk-\weaklyhardm \} \implies \cardinality{ \critahav } = \weaklyhardk-\weaklyhardm.
\end{equation}
\item When $\weaklyhardm = 2$
\begin{equation}
\critahav = \{ \langle \cahanode_1, \cahanode_2 \rangle \mid 0 < \cahanode_i \leq \weaklyhardk-\weaklyhardm \land \cahanode_1 \geq \cahanode_2 \land (\cahanode_1 - \cahanode_2) \text{ mod } \comprate = 0 \}
\end{equation}
Here, $\cahanode_1$ can assume values between $1$ and $\weaklyhardk-\weaklyhardm$.
Once the value of $\cahanode_1$ is fixed, $\cahanode_2$ can assume values between $1$ and $\cahanode_1$.
Additionally, $\cahanode_1-\cahanode_2$ must be a multiple of $\comprate$.
Figure \ref{fig:possible_values_u_2} illustrates the possible values of $\cahanode_2$, which are $\cahanode_1$, $\cahanode_1 - \comprate$, $\cahanode_1 - 2\comprate$, $\dots$, $\cahanode_1 - \lfloor {\frac{\cahanode_1 - 1}{\comprate}} \rfloor \comprate$.
This means that there are $\lfloor {\frac{\cahanode_1 - 1}{\comprate}} \rfloor + 1 = \lceil {\frac{\cahanode_1}{\comprate}} \rceil$ valid values of $\cahanode_2$.
This implies that for $\weaklyhardm = 2$,
\begin{equation} \label{card m crit 2 final}
\cardinality{ \critahav } = \sum\limits_{j = 1}^{\weaklyhardk-\weaklyhardm} \Big\lceil \frac{j}{\comprate} \Big\rceil.
\end{equation}

\begin{figure}[ht]
\centering
\begin{tikzpicture}
\draw[] (0,0) -- (0.5,0);
\draw[densely dotted] (0.5,0) -- (1.5,0);
\draw[] (1.5,0) -- (2,0);
\draw[densely dotted] (2,0) -- (2.75,0);
\draw[] (2.75,0) -- (3.75,0);
\draw[densely dotted] (3.75,0) -- (5.75,0);
\draw[->] (5.75,0) -- (9,0);
\draw[] (0, 3pt) -- (0, -3pt) node[below]{$0$};
\draw[] (2, 3pt) -- (2, -3pt) node[below]{$1$};
\draw[] (2.75, 3pt) -- (2.75, -3pt);
\draw[] (3.75, 3pt) -- (3.75, -3pt);
\draw[] (5.75, 3pt) -- (5.75, -3pt);
\draw[] (6.75, 3pt) -- (6.75, -3pt);
\draw[] (7.75, 3pt) -- (7.75, -3pt) node[below]{$\cahanode_1$};
\draw [decorate,decoration={brace,amplitude=5pt}] (2,0.2)  -- +(0.75,0) node [black,midway,above=4pt] {$<\comprate$};
\draw [decorate,decoration={brace,amplitude=5pt}] (2.75,0.2)  -- +(1,0) node [black,midway,above=4pt] {$\comprate$};
\draw [decorate,decoration={brace,amplitude=5pt}] (5.75,0.2)  -- +(1,0) node [black,midway,above=4pt] {$\comprate$};
\draw [decorate,decoration={brace,amplitude=5pt}] (6.75,0.2)  -- +(1,0) node [black,midway,above=4pt] {$\comprate$};
\draw [decorate,decoration={brace,amplitude=5pt}] (7.75,-0.6)  -- +(-5,0) node [black,midway,below=4pt] {$\left\lfloor \frac{\cahanode_1 - 1}{\comprate} \right\rfloor \cdot \comprate$};
\end{tikzpicture}
\caption{Illustration of the possible values for $\cahanode_2$ when $\weaklyhardm = 2$.}
\label{fig:possible_values_u_2}
\end{figure}
\item The last case $\weaklyhardm > 2$.
In this case we further split the set $\critahav$ into the disjoint subsets $\critdahav$
\begin{equation}
\critahav = \dot{\bigcup_{\difference \in \{ 0, \dots, \weaklyhardk - \weaklyhardm - 1 \}}} \critdahav,
\end{equation}
based on the difference $\difference$ between the second and the last element of each tuple, obtaining
\begin{equation}
\critdahav = \{ \langle \cahanode_1, \dots, \cahanode_{\weaklyhardm} \rangle \mid \cahanode \in \critahav \text{ and } \cahanode_2 - \cahanode_{\weaklyhardm} = \difference \}.
\end{equation}
We briefly justify this statement:
\begin{itemize}
\item[$\subseteq$] 
Let $\cahanode \in \critahav$. Then $\cahanode \in \critdahav$ with $\difference = \cahanode_2 - \cahanode_{\weaklyhardm}$. Since $0 < \cahanode_{\weaklyhardm} \leq \cahanode_2 \leq \weaklyhardk-\weaklyhardm$, the difference $\cahanode_2 - \cahanode_{\weaklyhardm}$ is greater than or equal to $0$ and smaller than $\weaklyhardk-\weaklyhardm$. Therefore, $\difference \in \{ 0, \dots, \weaklyhardk-\weaklyhardm - 1 \}$ and $\critdahav$ is a set of the union.
Moreover, since $\difference$ is uniquely determined by the elements $\cahanode_2$ and $\cahanode_{\weaklyhardm}$, $\cahanode$ can be contained in at most one of the sets $\critdahav$. This gives the disjointness of the sets.
\item[$\supseteq$]
Let $\cahanode$ be contained in $\critdahav$. By the definition of these sets, $\cahanode$ is contained in $\critahav$.
\end{itemize}
Since the sets $\critdahav$ are disjoint, we can calculate $\cardinality{\critahav}$ by summing up the cardinalities of the sets of the union,
\begin{equation} \label{card crit}
\cardinality{ \critahav } = \sum_{{\difference = 0}}^{\weaklyhardk-\weaklyhardm - 1} \cardinality{ \critdahav }.
\end{equation}
Inserting the definition of $\critahav$ leads to
\begin{align*}
\critdahav = \{ \langle \cahanode_1, \dots, \cahanode_{\weaklyhardm} \rangle \mid \,\, &0 < \cahanode_i \leq \weaklyhardk-\weaklyhardm \land \cahanode_i \geq \cahanode_j \text{ for } i < j \\
&\land (\cahanode_1 - \cahanode_2) \text{ mod } \comprate = 0 \land \cahanode_2 - \cahanode_{\weaklyhardm} = \difference \}.
\end{align*}
We continue by further splitting this set into disjoint subsets $\critdeahav$ based on the value of $\cahanode_2$ as follows,
\begin{equation}
\critdahav = \dot{\bigcup_{\e \in \{ \difference + 1, \dots, \weaklyhardk-\weaklyhardm \}}} \critdeahav
\end{equation}
where
\begin{equation}
\critdeahav = \{ \langle \cahanode_1, \dots, \cahanode_{\weaklyhardm} \rangle \mid \cahanode \in \critdahav \land \cahanode_2 = \e \}.
\end{equation}
We briefly show the validity of this statement.
\begin{itemize}
\item[$\subseteq$]
Let $\cahanode \in \critdahav$. Then $\cahanode \in \critdeahav$ with $\e = \cahanode_2$. It remains to show that $\cahanode_2 \in \{ \difference + 1, \dots, \weaklyhardk-\weaklyhardm \}$. Transforming $\difference = \cahanode_2 - \cahanode_{\weaklyhardm}$ leads to $\cahanode_2 = \difference + \cahanode_{\weaklyhardm}$, and since $\cahanode_{\weaklyhardm} > 0$ we have $\cahanode_2 \geq \difference + 1$, which gives the lower bound. The upper bound $\cahanode_2 \leq \weaklyhardk-\weaklyhardm$ follows directly from the definition. Moreover, since $\e$ is uniquely determined by $\cahanode_2$, $\cahanode$ can be contained in at most one of the sets $\critdeahav$. This gives the disjointness of the sets.
\item[$\supseteq$]
Let $\cahanode$ be contained in $\critdeahav$. By definition of these sets, $\cahanode$ is contained in $\critdahav$.
\end{itemize}
The disjointness of the subsets $\critdeahav$ allows us to calculate
\begin{equation} \label{card crit d}
\cardinality{ \critdahav } = \sum_{{\e = \difference + 1}}^{\weaklyhardk-\weaklyhardm} \cardinality{ \critdeahav },
\end{equation}
and inserting the definition of $\critdahav$ gives us
\begin{align*}
\critdeahav = \{ \langle \cahanode_1, \dots, \cahanode_{\weaklyhardm} \rangle \mid \,\, &0 < \cahanode_i \leq \weaklyhardk-\weaklyhardm \text{ and } \cahanode_i \geq \cahanode_j \text{ for } i < j \\
&\land (\cahanode_1 - \cahanode_2) \text{ mod } \comprate = 0 \\
&\land \cahanode_2 - \cahanode_{\weaklyhardm} = \difference \land \cahanode_2 = \e \}.
\end{align*}
To determine $\cardinality{\critdeahav}$, we first consider the number of alternatives for $\cahanode_1$, and then the number of possible values for $\cahanode_3, \dots, \cahanode_{\weaklyhardm - 1}$. In fact, $\cahanode_2$ is determined based on $\e$ and $\cahanode_{\weaklyhardm}$ is obtained from $\cahanode_2$ and $\difference$.
The value of $\cahanode_1$ must satisfy $\cahanode_2 \leq \cahanode_1 \leq \weaklyhardk-\weaklyhardm$.
In addition, $\cahanode_1-\cahanode_2$ has to be a multiple of $\comprate$.
Figure \ref{fig:possible_values_u_1} illustrates the values that $\cahanode_1$ can take, namely $\cahanode_2$, $\cahanode_2 + \comprate$, $\cahanode_2 + 2\comprate$, $\dots$, $\cahanode_2 + \lfloor \frac{\weaklyhardk-\weaklyhardm - \cahanode_2}{\comprate} \rfloor \comprate$.
This means that there are $\lfloor \frac{\weaklyhardk-\weaklyhardm - \cahanode_2}{\comprate} \rfloor + 1 = \lceil \frac{\weaklyhardk-\weaklyhardm - \cahanode_2 + 1}{\comprate} \rceil$ alternatives $\cahanode_1$.

\begin{figure}[ht]
\centering
\begin{tikzpicture}
\draw[] (0,0) -- (0.5,0);
\draw[densely dotted] (0.5,0) -- (1.5,0);
\draw[] (1.5,0) -- (4,0);
\draw[densely dotted] (4,0) -- (6,0);
\draw[] (6,0) -- (7,0);
\draw[densely dotted] (7,0) -- (7.75,0);
\draw[->] (7.75,0) -- (9,0);
\draw[] (0, 3pt) -- (0, -3pt) node[below]{$0$};
\draw[] (2, 3pt) -- (2, -3pt) node[below]{$u_2$};
\draw[] (3, 3pt) -- (3, -3pt);
\draw[] (4, 3pt) -- (4, -3pt);
\draw[] (6, 3pt) -- (6, -3pt);
\draw[] (7, 3pt) -- (7, -3pt);
\draw[] (7.75, 3pt) -- (7.75, -3pt) node[below]{$\weaklyhardk-\weaklyhardm$};
\draw [decorate,decoration={brace,amplitude=5pt}] (2,0.2)  -- +(1,0) node [black,midway,above=4pt] {$\comprate$};
\draw [decorate,decoration={brace,amplitude=5pt}] (3,0.2)  -- +(1,0) node [black,midway,above=4pt] {$\comprate$};
\draw [decorate,decoration={brace,amplitude=5pt}] (6,0.2)  -- +(1,0) node [black,midway,above=4pt] {$\comprate$};
\draw [decorate,decoration={brace,amplitude=5pt}] (7,0.2)  -- +(0.75,0) node [black,midway,above=4pt] {$<\comprate$};
\draw [decorate,decoration={brace,amplitude=5pt}] (7,-0.6)  -- +(-5,0) node [black,midway,below=4pt] {$\left\lfloor \frac{\weaklyhardk-\weaklyhardm - \cahanode_2}{\comprate} \right\rfloor \cdot \comprate$};
\end{tikzpicture}
\caption{Illustration of the possible values for $\cahanode_1$ when $\weaklyhardm > 2$.}
\label{fig:possible_values_u_1}
\end{figure}

We continue by considering the number of possibilities to choose the values $\cahanode_3, \dots, \cahanode_{\weaklyhardm - 1}$. For $\cahanode$ to be contained in the set $\critdeahav$, the following must be true for these values,
\begin{equation} \label{descending order}
\cahanode_2 \geq \cahanode_3 \geq \dots \geq \cahanode_{\weaklyhardm - 1} \geq \cahanode_{\weaklyhardm}.
\end{equation}
Rewriting the condition $\difference = \cahanode_2 - \cahanode_{\weaklyhardm}$, leads to
\begin{equation}
\difference = (\cahanode_2 - \cahanode_3) + (\cahanode_3 - \cahanode_4) + \dots + (\cahanode_{\weaklyhardm - 1} - \cahanode_{\weaklyhardm}) 
= \sum_{i = 3}^{\weaklyhardm} \cahanode_{i - 1} - \cahanode_i.
\end{equation}
The requirement for the values $\cahanode_3, \dots, \cahanode_{\weaklyhardm - 1}$ stated in \eqref{descending order} is equivalent to each summand of this sum being greater than or equal to zero, i.e., $\cahanode_{i - 1} - \cahanode_i \geq 0$.
Defining $\difference_i = \cahanode_{i - 1} - \cahanode_i$, we obtain
\begin{equation}
\difference = \sum_{i = 3}^{\weaklyhardm} \difference_i \quad \text{where } \difference_i \in \natnumbers.
\end{equation}
This transforms the problem of how many possibilities there are to choose the values $\cahanode_3, \dots, \cahanode_{\weaklyhardm - 1}$ for a given $\difference$ and $\cahanode_2$, into the problem of how many possibilities there are to split the natural number $\difference$ into $\weaklyhardm - 2$ natural non-negative summands.
To solve this, we prove the following lemma.

\begin{lemma} \label{Possibilities split natural number}
Let $n \in \natnumbers$. Then there exist $n + s - 1 \choose n$ different possibilities to choose $s \in \natnumbers^{+}$ summands $n_i$ such that 
\begin{align*}
n = \sum_{i = 1}^s n_i \quad \text{and} \quad n_i \in \natnumbers.
\end{align*}
\end{lemma}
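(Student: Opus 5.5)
We prove the lemma with a stars-and-bars bijection, and check it afterwards by induction on $s$. Since $n$ and $s$ are fixed, we count ordered tuples $(n_1,\dots,n_s)\in\natnumbers^s$ with $\sum_{i=1}^s n_i = n$. Ordered tuples are the right objects here: in the application above, the summands $\difference_i = \cahanode_{i-1}-\cahanode_i$ are attached to fixed positions $i=3,\dots,\weaklyhardm$. We encode each tuple as a binary word of length $n+s-1$ that consists of $n$ symbols $\star$ and $s-1$ separator symbols.

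The map sends $(n_1,\dots,n_s)$ to the word $\star^{n_1}\,|\,\star^{n_2}\,|\,\cdots\,|\,\star^{n_s}$. This word has exactly $n$ stars and $s-1$ separators, so it has length $n+s-1$. For the converse, take any word of length $n+s-1$ with exactly $n$ stars. Its $s-1$ separators cut it into $s$ (possibly empty) maximal blocks of stars. Reading off the block lengths gives a tuple of non-negative integers summing to $n$, and this procedure is clearly inverse to the encoding. Hence the number of tuples equals the number of ways to choose the $n$ star positions among $n+s-1$ positions, which is $\binom{n+s-1}{n}$.

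As a sanity check, and as an alternative proof if a more formal argument is wanted, we induct on $s$. For $s=1$ the only choice is $n_1=n$, and indeed $\binom{n}{n}=1$. For the step from $s$ to $s+1$, fix $n_{s+1}=t\in\{0,\dots,n\}$. The remaining summands then form a decomposition of $n-t$ into $s$ parts, so the count is $\sum_{t=0}^{n}\binom{n-t+s-1}{n-t}=\sum_{j=0}^{n}\binom{j+s-1}{j}$. The hockey-stick identity evaluates this to $\binom{n+s}{n}$, which is the claimed formula for $s+1$.

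No step is a real obstacle. The only point needing care is the edge cases: empty blocks, and $n=0$, where the count is $\binom{s-1}{0}=1$ as expected. The bijection handles both, because it explicitly allows adjacent separators and separators at either end of the word.
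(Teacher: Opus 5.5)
Your proof is correct, but your main argument takes a different route from the paper. You count the ordered tuples directly with the stars-and-bars encoding $(n_1,\dots,n_s)\mapsto \star^{n_1}\,|\,\cdots\,|\,\star^{n_s}$. The binomial coefficient then appears in one step, as the number of ways to place $n$ stars among $n+s-1$ positions.

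The paper argues purely by induction on $s$. It derives the recursion $N(n,s+1)=\sum_{i=0}^{n}N(i,s)$ by fixing the first summand; you fix the last one, which is equivalent. It then proves the hockey-stick identity $\sum_{i=0}^{n}\binom{i+s-1}{i}=\binom{n+s}{n}$ by a separate induction on $n$ using Pascal's rule. Your ``sanity check'' is therefore essentially the paper's proof, except that you quote the hockey-stick identity instead of deriving it.

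The bijection is shorter and explains why a binomial coefficient arises at all. This fits the paper's own remark that $\binom{k}{m}$ counts placements of zeros in a window. It also handles the edge cases ($n=0$, empty blocks) uniformly, with no separate base cases. The paper's double induction is more mechanical but fully self-contained and algebraic: it never needs to argue that an encoding is a bijection. Its recursion also mirrors how the differences $d_i=u_{i-1}-u_i$ are peeled off one at a time in the application.
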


\begin{proof}
We prove this by induction over $s$ and $n$. To simplify notation, we define $N(n, s)$ as the number of possibilities to split the number $n$ into $s$ summands according to the assumptions of the lemma. First, let $s = 1$. Obviously,
\begin{align*}
N(n, 1) = 1 = {n \choose n} = {n + 1 - 1 \choose n}.
\end{align*}
Now, we assume
\begin{align} \label{IH s}
N(n, s) = {n + s - 1 \choose n} \quad \text{for an } s \in \natnumbers^{+} \quad \text{(induction hypothesis)},
\end{align}
and prove the statement for $s + 1$. The central observation is the following, splitting $n$ into two summands, there are the following $n + 1$ possibilities,
\begin{align*}
n = 0 + n
= 1 + (n - 1)
= \dots
= (n - 1) + 1
= n + 0.
\end{align*}
To achieve the splitting of $n$ into $s + 1$ summands instead of two summands, we split the second summand of each expression into $s$ summands. Thus, for the number of possibilities to split $n$ into $s + 1$ summands we obtain
\begin{align} \label{eq n s + 1}
N(n, s + 1) = N(n, s) + N(n - 1, s) + \dots + N(0, s) = \sum_{i = 0}^{n} N(i, s).
\end{align}
Next, we prove
\begin{align} \label{induction over n}
\sum_{i = 0}^{n} {i + s - 1 \choose i} = {n + s \choose n}
\end{align}
by induction over $n$. Let $n = 0$, then
\begin{align*}
\sum_{i = 0}^{0} {i + s - 1 \choose i} = { 0 + s - 1 \choose 0 } = 1 = { 0 + s \choose 0 }.
\end{align*}
We assume 
\begin{align*}
\sum_{i = 0}^{n} {i + s - 1 \choose i} = {n + s \choose n} \quad \text{for an } n \in \natnumbers \quad \text{(induction hypothesis)},
\end{align*}
and show the claim for $n + 1$,
\begin{align*}
\sum_{i = 0}^{n + 1} {i + s - 1 \choose i} 
&= {(n + 1) + s - 1 \choose n + 1} + \sum_{i = 0}^{n} {i + s - 1 \choose i} \\
&\overset{}{=} {n + s \choose n + 1} + {n + s \choose n}
\overset{}{=} {n + s + 1 \choose n + 1},
\end{align*}
which proves Equation \eqref{induction over n}.
Returning to Equation \eqref{eq n s + 1} and using the induction hypothesis from Equation \eqref{IH s}, we continue,
\begin{align*}
N(n, s + 1) 
\overset{}{=} \sum_{i = 0}^{n} N(i, s) 
\overset{}{=} \sum_{i = 0}^{n} {i + s - 1 \choose i} 
\overset{}{=} {n + s \choose n} 
= {n + (s + 1) - 1 \choose n},
\end{align*}
which finishes the induction over $s$ and proves the lemma.
\end{proof}

Returning to the original problem of how many possibilities there are to split the number $\difference$ into $\weaklyhardm - 2$ natural non-negative summands, applying Lemma~\ref{Possibilities split natural number} we obtain ${\difference + \weaklyhardm - 3 \choose \difference}$ possibilities.
Hence, there are also $\difference + \weaklyhardm - 3 \choose \difference$ possibilities to choose the values $\cahanode_3, \dots, \cahanode_{\weaklyhardm - 1}$.

Now we can combine the results.
First, the independence of the choices of the values of $\cahanode_1$ and $\cahanode_3, \dots, \cahanode_{\weaklyhardm - 1}$ implies that
\begin{equation}
\cardinality{ \critdeahav } = \left\lceil \frac{\weaklyhardk-\weaklyhardm - \e + 1}{\comprate} \right\rceil \, {\difference + \weaklyhardm - 3 \choose \difference}.
\end{equation}

Using Equation \eqref{card crit d}, we obtain
\begin{equation} \label{card crit d final}
\begin{array}{rcl}
\cardinality{ \critdahav } 
&=& \sum_{{\e = \difference + 1}}^{\weaklyhardk-\weaklyhardm} \, \cardinality{ \critdeahav } \\[2mm]
&=& \sum_{{\e = \difference + 1}}^{\weaklyhardk-\weaklyhardm} \left\lceil \frac{\weaklyhardk-\weaklyhardm - \e + 1}{\comprate} \right\rceil \, {\difference + \weaklyhardm - 3 \choose \difference} \\[2mm]
&=& {\difference + \weaklyhardm - 3 \choose \difference} \, \sum_{{\e = \difference}}^{\weaklyhardk-\weaklyhardm - 1} \left\lceil \frac{\weaklyhardk-\weaklyhardm - (\e + 1) + 1}{\comprate} \right\rceil \\[2mm]
&=& {\difference + \weaklyhardm - 3 \choose \difference} \, \sum_{{\e = \difference - \weaklyhardk+\weaklyhardm}}^{-1} \left\lceil \frac{\weaklyhardk-\weaklyhardm - (\e + \weaklyhardk-\weaklyhardm)}{\comprate} \right\rceil \\[2mm]
&=& {\difference + \weaklyhardm - 3 \choose \difference} \, \sum_{{\e = 1}}^{\weaklyhardk-\weaklyhardm - \difference} \left\lceil \frac{\e}{\comprate} \right\rceil.
\end{array}
\end{equation}

Inserting this result into Equation \eqref{card crit} yields that, for $\weaklyhardm > 2$
\begin{equation} \label{card m crit 3 final}
\cardinality{ \critahav } 
\overset{}{=} \sum_{{\difference = 0}}^{\weaklyhardk-\weaklyhardm - 1} \, \cardinality{ \critdahav }
\overset{}{=} \sum_{{\difference = 0}}^{\weaklyhardk-\weaklyhardm - 1} 
{\difference + \weaklyhardm - 3 \choose \difference} \, \sum_{{\e = 1}}^{\weaklyhardk-\weaklyhardm - \difference} \left\lceil \frac{\e}{\comprate} \right\rceil . 
\end{equation}

\end{enumerate}

Summarizing the results of Equations~\eqref{card m},~\eqref{card m noncrit final},~\eqref{card m crit 1 final},~\eqref{card m crit 2 final} and \eqref{card m crit 3 final} for the different cases based on the difference between $\weaklyhardk$ and $\weaklyhardm$, and recalling that $\weaklyhardm>0$, we obtain that the cardinality of the compressed automaton for \trim{\weaklyhardm}{\weaklyhardk}{\comprate} is
\begin{equation}
\cardinality{\compahav} = {\weaklyhardk - 1 \choose \weaklyhardm - 1} + 
\begin{cases}
   \weaklyhardk-\weaklyhardm \vphantom{\sum\limits_{j = 1}^{\weaklyhardk-\weaklyhardm}}
      & \quad\weaklyhardm = 1 \\[3mm]
   \sum\limits_{j = 1}^{\weaklyhardk-\weaklyhardm} \left\lceil \frac{j}{\comprate} \right\rceil
      & \quad\weaklyhardm = 2 \\[3mm]
   \sum\limits_{i = 0}^{\weaklyhardk-\weaklyhardm - 1} 
     {\weaklyhardm - 3 + i \choose i} \sum\limits_{j = 1}^{\weaklyhardk-\weaklyhardm - i} \left\lceil \frac{j}{\comprate} \right\rceil 
      & \quad\weaklyhardm > 2
\end{cases}.
\end{equation}
\section{The language of $\compaha$}
\label{sec:cardinality}
We discuss the language cardinality of $\compaha$. 
Let $n$ be the number of its states, i.e., $n = \cardinality{\compahav} = \cardinality{ \{ \cahanode_1, \dots, \cahanode_n \} }$, and $\adj \in \realnumbersmatrix{n}{n}$ the adjacency matrix defined by
\begin{align*}
\adjelement_{ij} = 
\begin{cases}
1 \quad & \text{if } (\cahanode_i, \deadlinehit, \cahanode_j) \in \compahae \text{ or } (\cahanode_i, \deadlinemiss, \cahanode_j) \in \compahae \\
0 & \text{otherwise.}
\end{cases}
\end{align*}
The number of accepted words of length $\wordlength$ can be calculated by $\cardinality{ \langlength{\lang}\,(\compaha) } = \initstatevector \, \adj^\wordlength \accstatevector$, where $\initstatevector \in \realnumbersmatrix{1}{n}$ is a vector representing the initial state $\cahainitnode$,
(i.e.,~$\initstatevector_i = 1$ if $u_i$ is the initial state and $0$ otherwise),
and $\accstatevector \in \realnumbersmatrix{n}{1}$ is a vector representing the accepting states, which for us means $\accstatevector =
\begin{pmatrix}
1 & 1 & \dots & 1
\end{pmatrix}^\top.$ 
Without loss of generality, we assume that $\cahanode_1 = \cahainitnode$. Then we have
$\initstatevector =
\begin{pmatrix}
1 & 0 & \dots & 0
\end{pmatrix}$,
Since $\cardinality{ \langlength{\lang}\,(\compaha) }$ is one-dimensional
, it equals its transpose,
\begin{align} \label{eq:card_as_one_norm}
\cardinality{ \langlength{\lang}\,(\compaha) }
= \accstatevector^\top (\adj^\top)^\wordlength \initstatevector^\top 
= \begin{pmatrix}
1 & 1 & \dots & 1
\end{pmatrix} (\adj^\top)^\wordlength \, \initstatevector^\top
= \onenorm{(\adj^\top)^\wordlength \, \initstatevector^\top},
\end{align}
 where the $1$-norm is defined as usual. 

Next, we establish some useful general properties of matrices for matrix $\adj$. A matrix $A = (a_{ij}) \in \realnumbersmatrix{m}{n}$ or vector $x \in \realnumbers^{n}$ is called strictly positive, if its entries are strictly positive, i.e., $a_{ij} > 0$, respectively $x_i > 0$. We write $A > 0$, respectively $x > 0$. A matrix or vector is called non-negative, if its entries are non-negative, i.e., $a_{ij} \geq 0$, respectively $x_i \geq 0$. We write $A \geq 0$, respectively $x \geq 0$. A square matrix $A = (a_{ij}) \in \realnumbersmatrix{n}{n}$ is said to be irreducible, if for each pair $i$ and $j$ of indices, there exists a sequence of indices $i = i_1, i_2, \dots, i_{l - 1}, i_l = j$ such that $a_{i_k, i_{k+1}} \neq 0$ for all $1 \leq k \leq l-1$. A square non-negative matrix $A \in \realnumbersmatrix{n}{n}$ is called primitive, if $\exists k \in \mathbb{N}^{+}$ such that $A^k > 0$. Primitivity is invariant under transpose. We continue by showing that some of these properties apply to $\adj$.

\begin{lemma}\label{adj non-negative}\label{adj irreducible}
\label{adj primitive}
$\adj$ is a non-negative, irreducible and primitive matrix.
\end{lemma}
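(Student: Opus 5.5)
Non-negativity is immediate, since every entry $\adjelement_{ij}$ is $0$ or $1$ by definition. Irreducibility and primitivity are both statements about the directed graph underlying $\compaha$ (an arc $\cahanode_i \to \cahanode_j$ whenever some transition of $\compahae$ connects them). The plan is to show that this graph is strongly connected and has a self-loop. A strongly connected graph with a loop at some vertex is aperiodic, and an irreducible non-negative matrix with a positive diagonal entry is primitive. Concretely, if every vertex reaches $\cahainitnode$ within $a$ steps and $\cahainitnode$ reaches every vertex within $b$ steps, then padding with the loop at $\cahainitnode$ gives $\adj^{a+b} > 0$.

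The self-loop is at $\cahainitnode = \langle 0,\dots,0\rangle$. Its one-transition leads to $\langle \max(0-1,0),\dots\rangle = \langle 0,\dots,0\rangle$, so $\adjelement_{11} = 1$.

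\emph{Every node reaches $\cahainitnode$.} From any $\cahanode \in \compahav$, read $\cahanode_1$ ones. Each one-transition decreases every positive coordinate by one and keeps zero coordinates at zero. Since $\cahanode_1$ is the largest coordinate, after $\cahanode_1 \le \weaklyhardk-\weaklyhardm$ steps we are at $\langle 0,\dots,0\rangle$. Lemma~\ref{F lambda m well defined} guarantees that all intermediate tuples lie in $\compahav$.

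\emph{$\cahainitnode$ reaches every node.} This is the main obstacle, because the zero-transitions of $\compaha$ produce a first coordinate that is rounded by the modulo term. I would construct, for a target $\hat{\cahanode} = \langle \hat{\cahanode}_1,\dots,\hat{\cahanode}_{\weaklyhardm}\rangle$, a word of the form $0\,1^{t_{\weaklyhardm}}\,0\,1^{t_{\weaklyhardm-1}}\cdots 0\,1^{t_1}$, read left to right. Start at $\langle 0,\dots,0\rangle$. The $j$-th zero shifts the existing coordinates one position to the right and inserts a new first coordinate. While fewer than $\weaklyhardm-1$ zeros have been read, the second-to-last coordinate is $0$, so the inserted value is exactly $\weaklyhardk-\weaklyhardm$. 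The ones then decrease all coordinates uniformly until they hit $0$. Choose $t_1 = \weaklyhardk-\weaklyhardm-\hat{\cahanode}_1$ and $t_i = \hat{\cahanode}_{i-1}-\hat{\cahanode}_i$ for $i \ge 2$. Tracking coordinates backwards shows that the component inserted by the zero that ends up in position $i$ is reduced by exactly $\sum_{j\le i} t_j = \weaklyhardk-\weaklyhardm-\hat{\cahanode}_i$, which yields $\hat{\cahanode}$ in the uncompressed dynamics.

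The delicate point is the last zero. When $\hat{\cahanode}_{\weaklyhardm} > 0$ (so $\hat{\cahanode}$ is critical), the node before the last zero has a positive second-to-last coordinate. The inserted first coordinate is then $\weaklyhardk-\weaklyhardm - (\weaklyhardk-\weaklyhardm-\cahanode_1) \bmod \comprate$ rather than $\weaklyhardk-\weaklyhardm$. Here I use the defining property $(\hat{\cahanode}_1-\hat{\cahanode}_2) \bmod \comprate = 0$ of critical nodes in $\compahav$. Just before the final $1^{t_1}$ block, the current second coordinate is $\hat{\cahanode}_2 + t_1$, and it equals the old first coordinate $\cahanode_1$ at the moment of the last zero. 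The inserted value is therefore $\weaklyhardk-\weaklyhardm - (\weaklyhardk-\weaklyhardm-\hat{\cahanode}_2-t_1) \bmod \comprate$, and I must check that this equals $\hat{\cahanode}_1 + t_1$. If the target is to be reached, the last zero must insert $\hat{\cahanode}_1 + t_1$ with $\hat{\cahanode}_1+t_1 \le \weaklyhardk-\weaklyhardm$ and $\hat{\cahanode}_1+t_1 \equiv \weaklyhardk-\weaklyhardm \pmod{\comprate}$ relative to $\cahanode_1 = \hat{\cahanode}_2 + t_1$. So I will redefine $t_1$ as the unique value making this consistent: take $t_1 = (\weaklyhardk-\weaklyhardm-\hat{\cahanode}_2) \bmod \comprate$ adjusted so that $\hat{\cahanode}_1 + t_1$ is the largest value $\le \weaklyhardk-\weaklyhardm$ congruent to $\hat{\cahanode}_2 + t_1$ modulo $\comprate$. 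I will verify the inserted value with the computation already carried out in the proof of Lemma~\ref{F lambda m well defined}, adjusting the earlier $t_i$ accordingly. Non-critical targets avoid this issue, since the last insertion is exact.

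Together, these give strong connectivity, hence irreducibility. With the self-loop, they give primitivity; primitivity of $\adj^\top$ follows by invariance under transposition.
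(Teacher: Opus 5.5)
Your overall route is the paper's: non-negativity is immediate, and irreducibility follows from strong connectivity. Ones lead back to $\langle 0,\dots,0\rangle$, and $\langle 0,\dots,0\rangle$ reaches every node. Primitivity then follows from the one-loop at $\langle 0,\dots,0\rangle$. The paper only asserts that every node is reachable from the initial node ``by an appropriate combination of one- and zero-transitions''. It even attributes the graph to $\mathcal{H}_{m,k}$, although $\Pi$ is the adjacency matrix of $\mathcal{T}_{m,k,c}$. So your explicit word, and your identification of the rounded zero-transition as the real difficulty, go beyond it. As written, however, your construction has one false step and one wrong formula.

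First, your claim that non-critical targets avoid the rounding is false for your word with $m$ zeros. With $t_1=k-m-\hat u_1$ and $t_i=\hat u_{i-1}-\hat u_i$, the second-to-last coordinate just before the last zero is $\hat u_m+t_1$. For $\hat u_m=0$ this equals $k-m-\hat u_1$, which is positive whenever $\hat u_1<k-m$, so the rounded rule fires. In $\mathcal{T}_{2,5,3}$ with target $\langle 2,0\rangle$, your word $0\,1\,1\,0\,1$ visits $\langle 0,0\rangle,\langle 3,0\rangle,\langle 2,0\rangle,\langle 1,0\rangle,\langle 1,1\rangle,\langle 0,0\rangle$. Compare the zero-edge from $\langle 1,0\rangle$ to $\langle 1,1\rangle$ in Figure~\ref{fig:example_automaton_compressed}. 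The repair is to use only $r$ zeros when the target has exactly $r\le m-1$ positive coordinates. Before each of them at most $r-1\le m-2$ coordinates are positive, so every insertion is exactly $k-m$.

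Second, for critical targets your leading formula $t_1=(k-m-\hat u_2)\bmod c$ is wrong. Since $\hat u_1\equiv\hat u_2 \pmod c$, it equals $(k-m-\hat u_1)\bmod c$. Setting $M:=k-m-(k-m-\hat u_1)\bmod c$, that is the distance from $M$ up to $k-m$, whereas you need the distance from $\hat u_1$ up to $M$. Your ``adjusted so that'' clause does single out the right value. Explicitly, take $t_1=M-\hat u_1$ and $t_2=k-m-\hat u_2-t_1$, which equals $(k-m-\hat u_1)\bmod c+(\hat u_1-\hat u_2)\ge 0$. Keep $t_i=\hat u_{i-1}-\hat u_i$ for $i\ge 3$. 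The first $m-1$ zeros are then exact and lead to the non-critical node $\langle \hat u_2+t_1,\dots,\hat u_m+t_1,0\rangle$. Because $\hat u_2+t_1\equiv M \pmod c$ and $0\le k-m-M<c$, the last zero inserts $k-m-(k-m-M)=M=\hat u_1+t_1$. Reading $t_1$ ones then reaches $\hat u$. With these two corrections your argument is complete, and it fills in the step the paper leaves unproved.
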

\begin{proof}
Since $\adjelement_{ij}$ can only take the values $1$ and $0$, the non-negativity follows directly.

To prove the irreducibility of a matrix, it is sufficient to show that its directed adjacency graph is strongly connected ~\cite[Theorem 8.43]{KnabnerLinAlg}. 
This graph is given by $\caha$. From each node, after taking $\weaklyhardk$ one-transitions, the initial node $\cahainitnode$ is reached. From $\cahainitnode$ every node can be reached by an appropriate combination of one- and zero-transitions.

According to \cite[Example 8.3.3]{MeyerMatAnaAndAppLinAlg}, 
a non-negative and irreducible matrix $A = (a_{ij}) \in \realnumbersmatrix{n}{n}$ is primitive if $A$ has at least one strictly positive entry on its diagonal, i.e., there exists an $i$ such that $a_{ii} > 0$. As already shown, $\adj$ is non-negative and irreducible. In addition, there always exists a transition from the initial state to itself, thus $u_{11} = 1$, which proves the primitivity of $\adj$. The primitivity of $\adj^\top$ follows from the invariance of primitivity under transpose.
\end{proof}
Next, we restate some well-known theorems concerning primitive matrices.
\begin{lemma}[Perron-Frobenius for Primitive Matrices] \label{Perron-Frobenius}
Let $A \in \realnumbersmatrix{n}{n}$ be a primitive matrix. Then there exists a real-valued, strictly positive and simple eigenvalue that is larger than any other eigenvalue of $A$. Moreover, the associated eigenvector is strictly positive and unique up to constant multiples.
\end{lemma}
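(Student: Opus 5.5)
The plan is to derive the primitive case from the Perron--Frobenius theorem for strictly positive matrices (Perron's theorem), using the fact that some power $A^k$ is strictly positive. I will take Perron's theorem as known: a strictly positive matrix $B$ has a simple real eigenvalue $\rho(B)>0$ that is strictly larger in modulus than every other eigenvalue, and its eigenvector is strictly positive and unique up to scaling. This can be proved via the Collatz--Wielandt characterization, maximizing $\min_i (Bx)_i/x_i$ over the simplex.

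\emph{Step 1: transfer of the spectrum.} Let $A$ be primitive with $A^k>0$, and write $\rho=\rho(A)$. The eigenvalues of $A^k$ are exactly $\mu^k$, where $\mu$ runs over the eigenvalues of $A$ with algebraic multiplicities; this follows from the Jordan or Schur form. Since $\rho(A^k)=\rho^k$ is simple for $A^k$, exactly one eigenvalue $\mu$ of $A$, counted with multiplicity, satisfies $\mu^k=\rho^k$. Every other eigenvalue $\nu$ has $|\nu|^k<\rho^k$, hence $|\nu|<\rho$.

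\emph{Step 2: identify the distinguished eigenvalue with $\rho$.} Let $z>0$ be the Perron vector of $A^k$. Since $A$ commutes with $A^k$, the vector $Az$ is also an eigenvector of $A^k$ for $\rho^k$, so $Az=\mu z$ by one-dimensionality of that eigenspace. Because $A\ge 0$ and $A$ is irreducible, hence has no zero row, $Az$ is non-negative and nonzero, so $\mu$ is real and positive. Combined with $|\mu|=\rho$, this gives $\mu=\rho$. Consequently $\rho$ is a real, strictly positive, simple eigenvalue of $A$, strictly larger in modulus than all others, with a strictly positive eigenvector $z$.

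\emph{Step 3: uniqueness up to scaling.} Any eigenvector of $A$ for $\rho$ is an eigenvector of $A^k$ for $\rho^k$. The latter eigenspace is one-dimensional, so the eigenvector is a multiple of $z$.

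\emph{Main obstacle.} The delicate point is the multiplicity bookkeeping in Step 1. One must argue with algebraic multiplicities to exclude two distinct eigenvalues $\mu\neq\mu'$ of $A$ with $\mu^k=\mu'^k=\rho^k$, for instance $\rho$ and $\rho e^{2\pi i/k}$. This is exactly what simplicity of $\rho^k$ for the positive matrix $A^k$ rules out, and it is also the only place where primitivity, rather than mere irreducibility, is essential.

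Applied to $\adj$, Lemma~\ref{adj primitive} supplies the hypothesis, and the same argument applies to $\adj^\top$.
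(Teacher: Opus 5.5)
Your proof is correct, but it takes a different route from the paper. The paper gives no argument of its own and defers the entire statement to Theorem~1.1 in Seneta's book, which establishes the primitive case as a whole. You outsource only Perron's theorem for strictly positive matrices and transfer it to $A$ through $A^k>0$:
\begin{itemize}
\item the spectral mapping $\mu\mapsto\mu^k$, counted with algebraic multiplicities, yields both simplicity and strict modulus dominance;
\item the commutation $AA^k=A^kA$, together with the one-dimensional $\rho^k$-eigenspace of $A^k$, forces the Perron vector $z>0$ of $A^k$ to satisfy $Az=\rho z$.
\end{itemize}

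Two small points of hygiene in Step~2. First, the scalar with $Az=\mu z$ is a priori a new one; it is $A^kz=\mu^kz=\rho^kz$ that gives $|\mu|=\rho$ and hence $\mu=\rho$. Second, the absence of zero rows follows directly from $A^k>0$, since a zero row of $A$ would persist in $A^k$, so irreducibility need not be invoked.

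The citation is shorter and bundles further facts the appendix later relies on, notably strictly positive left and right Perron vectors for the subsequent limit-of-powers lemma. Your reduction is self-contained modulo the positive case and makes transparent that primitivity enters only through $A^k>0$. Applied to $A^\top$, which is also primitive, it supplies the left Perron vector as well.
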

\begin{proof}
See Theorem 1.1 in~\cite{SenetaNonNegMatAndMarkCha}. 
\end{proof}

\begin{lemma}[Limits of Powers for Primitive Matrices] \label{limits of powers}
Let $A \in \realnumbersmatrix{n}{n}$ be a primitive matrix with the eigenvalue $\eigenvalue$ that equals the spectral radius of $A$. Further, let $\eigenvectorx$ and $\eigenvectory$ be the eigenvectors of $A$ and $A^\top$ associated to $\eigenvalue$. Then the following applies,
\begin{align*}
\lim_{k \rightarrow \infty} \left( \frac{A}{\eigenvalue} \right) ^k = \frac{\eigenvectorx \eigenvectory^\top}{\eigenvectory^\top \eigenvectorx} > 0.
\end{align*}
\end{lemma}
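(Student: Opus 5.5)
The plan is to split $A$ into its Perron part and a remainder with strictly smaller spectral radius. By Lemma~\ref{Perron-Frobenius}, $\eigenvalue$ is real, strictly positive and simple, and the eigenvectors $\eigenvectorx$ (of $A$) and $\eigenvectory$ (of $A^\top$) can be chosen strictly positive. Hence $\eigenvectory^\top \eigenvectorx > 0$, so the matrix
\begin{equation*}
P := \frac{\eigenvectorx \eigenvectory^\top}{\eigenvectory^\top \eigenvectorx}
\end{equation*}
is well defined. Each entry $\eigenvectorx_i \eigenvectory_j / (\eigenvectory^\top \eigenvectorx)$ is strictly positive, so $P > 0$. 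This already settles the positivity claim once the limit is identified.

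Next I record the algebraic identities of $P$. Using $A\eigenvectorx = \eigenvalue \eigenvectorx$ and $\eigenvectory^\top A = \eigenvalue \eigenvectory^\top$, one checks directly that
\begin{equation*}
P^2 = P, \qquad AP = PA = \eigenvalue P .
\end{equation*}
Now set $B := A - \eigenvalue P$. The identities give $PB = BP = 0$. Expanding $(\eigenvalue P + B)^k$, all mixed terms vanish, so
\begin{equation*}
\left(\frac{A}{\eigenvalue}\right)^k = P + \left(\frac{B}{\eigenvalue}\right)^k \quad \text{for all } k \geq 1 .
\end{equation*}
It therefore suffices to show $(B/\eigenvalue)^k \to 0$. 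By Gelfand's formula, this holds once the spectral radius of $B$ is strictly smaller than $\eigenvalue$.

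The main step is the spectrum of $B$, for which I use the decomposition $\realnumbers^n = \operatorname{span}(\eigenvectorx) \oplus \ker(\eigenvectory^\top)$. Both summands are invariant under $A$ and under $P$. On $\operatorname{span}(\eigenvectorx)$ the map $B$ acts as $\eigenvalue - \eigenvalue = 0$. On $\ker(\eigenvectory^\top)$ we have $P = 0$, so $B$ coincides with $A$ there. Because $\eigenvalue$ is a simple eigenvalue, the eigenvalues of $A$ restricted to $\ker(\eigenvectory^\top)$ are exactly the remaining eigenvalues of $A$, with $\eigenvalue$ removed once. Hence the spectrum of $B$ is $\{0\}$ together with the eigenvalues $\mu \neq \eigenvalue$ of $A$, and the spectral radius of $B$ equals $\max\{|\mu| : \mu \neq \eigenvalue\}$.

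The obstacle is then strict dominance in modulus, $|\mu| < \eigenvalue$. This is not implied by simplicity alone; for irreducible but imprimitive matrices there are other eigenvalues with $|\mu| = \eigenvalue$. I would read ``larger than any other eigenvalue'' in Lemma~\ref{Perron-Frobenius} as a statement about moduli, as in Seneta's Theorem~1.1. If a self-contained argument is wanted, I would argue as follows. Suppose $\mu$ is an eigenvalue of $A$ with $|\mu| = \eigenvalue$, and take $k$ with $A^k > 0$. Then $\mu^k$ and $\eigenvalue^k$ are eigenvalues of the strictly positive matrix $A^k$ of equal modulus. For a strictly positive matrix, the triangle-inequality argument forces $\mu^k = \eigenvalue^k$, with an eigenvector that is a multiple of $\eigenvectorx$. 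This contradicts the simplicity of $\eigenvalue$ unless $\mu = \eigenvalue$.

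With strict dominance established, $(B/\eigenvalue)^k \to 0$, and therefore $(A/\eigenvalue)^k \to P > 0$.
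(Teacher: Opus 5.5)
Your proof is correct. The paper gives no argument of its own for this lemma. It cites Equation (8.3.10) of Meyer and only adds that Meyer's normalisation of the Perron vectors is irrelevant, since $xy^\top/(y^\top x)$ is unchanged when $x$ and $y$ are rescaled.

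You instead reconstruct the standard derivation behind that citation, starting from the Perron--Frobenius lemma. You use the spectral projector $P = xy^\top/(y^\top x)$ and the splitting $A = \lambda P + B$ with $PB = BP = 0$, so that $(A/\lambda)^k = P + (B/\lambda)^k$. You then read off the spectrum of $B$ from $\mathbb{R}^n = \mathrm{span}(x) \oplus \ker(y^\top)$ and finish with Gelfand's formula. The citation is shorter. Your version is self-contained and shows exactly where primitivity is used: in the strict dominance $|\mu| < \lambda$. Without it (irreducible but imprimitive $A$), the powers $(A/\lambda)^k$ do not converge at all.

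Two small points. First, the lemma speaks of arbitrary eigenvectors $x$ and $y$, so state explicitly that $P$ is invariant under $x \mapsto \alpha x$, $y \mapsto \beta y$ for nonzero scalars $\alpha,\beta$. This is what justifies working with the positive representatives, and it is essentially the only thing the paper's own proof checks. Second, in your optional self-contained dominance argument, the closing appeal to simplicity is not what does the work. Once the eigenvector $z$ of $A$ for $\mu$ is shown to be a multiple of $x$, $\mu z = Az = \lambda z$ gives $\mu = \lambda$ directly. That step itself uses that the $\lambda^k$-eigenspace of the positive matrix $A^k$ is one-dimensional.
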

\begin{proof}
See Equation (8.3.10) in \cite{MeyerMatAnaAndAppLinAlg}. 
The normalization of $\eigenvectorx$ and $\eigenvectory$ that is assumed there is not necessary, since it holds that 
$${\eigenvectorx \eigenvectory^\top}({\eigenvectory^\top \eigenvectorx})^{-1} =
({\eigenvectorx}/{\norm{\eigenvectorx}}) ({\eigenvectory}/{\norm{\eigenvectory}})^\top
\left( {(\eigenvectory}/{\norm{\eigenvectory}})^\top ({\eigenvectorx}/{\norm{\eigenvectorx}}) \right)^{-1}.$$
\end{proof}

\begin{corollary}\label{limits of powers with vector}
Let $A \in \realnumbersmatrix{n}{n}$ be a primitive matrix with the eigenvalue $\eigenvalue$ that equals the spectral radius of $A$, and let $\eigenvectorx$ and $\eigenvectory$ be the eigenvectors of $A$ and $A^\top$ associated to $\eigenvalue$. Further, let $t \in \realnumbers^n$ be a non-negative vector that is not the null vector, $0 \neq t \geq 0$. Then the following statements apply,
\begin{enumerate} [label = \emph{(\roman*)}]
\item \label{limit (i)} $\begin{aligned}
\lim\limits_{k \rightarrow \infty} \frac{A^k t}{\eigenvalue^k} = \frac{\eigenvectory^\top t}{\eigenvectory^\top \eigenvectorx}\eigenvectorx \geq 0 \text{ and } \neq \overrightarrow{0},
\end{aligned}$
\item \label{limit (ii)} $\begin{aligned}
\lim_{k \rightarrow \infty} \frac{\norm{A^k t}}{\eigenvalue^k} = \norm{ \frac{\eigenvectory^\top t}{\eigenvectory^\top \eigenvectorx}\eigenvectorx } \quad \text{for an arbitrary norm } \norm{\quad},
\end{aligned}$
\item $\begin{aligned}
\lim_{k \rightarrow \infty} \frac{\norm{A^{k + 1}t}}{\norm{A^k t}} = \eigenvalue \quad \text{for an arbitrary norm } \norm{\quad}.
\end{aligned}$
\end{enumerate}
\end{corollary}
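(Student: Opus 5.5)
The plan is to derive all three statements from Lemma~\ref{limits of powers}, using only linearity and continuity. Set $P := \lim_{k\to\infty} (A/\eigenvalue)^k = \frac{\eigenvectorx\eigenvectory^\top}{\eigenvectory^\top\eigenvectorx}$. Matrix multiplication by the fixed vector $t$ is a continuous linear map, so
\[
\lim_{k\to\infty} \frac{A^k t}{\eigenvalue^k} = P t = \frac{\eigenvectory^\top t}{\eigenvectory^\top\eigenvectorx}\,\eigenvectorx .
\]
The limit is therefore the vector on the right-hand side of (i).

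Next I show this vector is non-negative and nonzero. By Lemma~\ref{Perron-Frobenius} applied to $A$ and to $A^\top$ (which is primitive as well), we may take $\eigenvectorx > 0$ and $\eigenvectory > 0$. Then $\eigenvectory^\top\eigenvectorx > 0$. Since $0 \neq t \geq 0$, at least one entry $t_i > 0$, so $\eigenvectory^\top t \geq \eigenvectory_i t_i > 0$. Hence the limit is a strictly positive multiple of $\eigenvectorx$, which is strictly positive, so it is nonzero and non-negative. This proves (i). Alternatively, one can use $P>0$ directly from Lemma~\ref{limits of powers}: then $Pt>0$ whenever $0\neq t\geq 0$.

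For (ii), every norm on $\realnumbers^n$ is continuous. Positive homogeneity together with $\eigenvalue>0$ (Lemma~\ref{Perron-Frobenius}) gives $\norm{A^k t}/\eigenvalue^k = \norm{A^k t/\eigenvalue^k}$. Passing to the limit using (i) yields (ii).

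For (iii), write
\[
\frac{\norm{A^{k+1}t}}{\norm{A^k t}} = \eigenvalue \cdot \frac{\norm{A^{k+1}t}/\eigenvalue^{k+1}}{\norm{A^k t}/\eigenvalue^{k}} .
\]
By (ii), the numerator and the denominator of the fraction converge to the same limit $L = \norm{Pt}$, and $L>0$ by (i). So the ratio tends to $1$ by the quotient rule for limits.

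The only delicate point is checking that the denominators are nonzero. The limit $L>0$ ensures $\norm{A^k t}\neq 0$ for all sufficiently large $k$, so the quotient is well defined eventually, and that is all a limit statement requires. (Primitivity even gives $A^k>0$ for large $k$, hence $A^k t>0$.) I expect this nonvanishing step, which rests on strict positivity of the Perron vectors, to be the main thing to argue carefully; the rest is routine continuity.
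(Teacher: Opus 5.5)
Your proof is correct and follows the same route as the paper. Part (i) applies the limit $P=\frac{xy^\top}{y^\top x}$ from Lemma~\ref{limits of powers} to $t$, part (ii) uses continuity and positive homogeneity of the norm, and part (iii) rewrites the ratio with a factor $\lambda$ and uses that the limit in (ii) is nonzero. The differences are cosmetic: you get non-nullity mainly from the strictly positive Perron vectors, whereas the paper uses $P>0$ (which you give as an alternative), and you state explicitly that the denominators in (iii) eventually do not vanish, which the paper leaves implicit.
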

This is mainly a consequence of Lemma~\ref{limits of powers}.
\begin{proof}
\begin{enumerate} [label = \emph{(\roman*)}]
\item{
This follows directly from Lemma~\ref{limits of powers}, which applies since $A$ is a primitive matrix,
\begin{align*}
\lim\limits_{k \rightarrow \infty} \frac{A^k t}{\eigenvalue^k}
= \lim\limits_{k \rightarrow \infty} \left( \frac{A}{\eigenvalue}\right)^k t
\overset{}
{=} \left( \frac{\eigenvectorx \eigenvectory^\top}{\eigenvectory^\top \eigenvectorx} \right) t
\overset{}
{=} \frac{\eigenvectorx (\eigenvectory^\top t)}{\eigenvectory^\top \eigenvectorx}
\overset{}
{=} \frac{(\eigenvectory^\top t) \eigenvectorx}{\eigenvectory^\top \eigenvectorx}
= \frac{\eigenvectory^\top t}{\eigenvectory^\top \eigenvectorx} \eigenvectorx.
\end{align*}
Moreover, the term is non-negative and not the null vector, since, looking at the third term $\left( \frac{\eigenvectorx \eigenvectory^\top}{\eigenvectory^\top \eigenvectorx} \right) t$, $\frac{\eigenvectorx \eigenvectory^\top}{\eigenvectory^\top \eigenvectorx}$ is strictly positive according to Lemma~\ref{limits of powers}, and $t$ is non-negative and not the null vector by assumption of this theorem, so by the definition of matrix multiplication their product is at least non-negative and not the null vector.
}
\item{
The second statement is a consequence of the first,
\begin{align*}
\lim_{k \rightarrow \infty} \frac{\norm{A^k t}}{\eigenvalue^k}
\overset{}
{=} \lim_{k \rightarrow \infty} \frac{\norm{A^k t}}{ \abs{\eigenvalue^k}}
\overset{}
{=} \lim_{k \rightarrow \infty} \norm{ \frac{{A^k t}}{ {\eigenvalue^k}}}
\overset{}
{=} \norm {\lim_{k \rightarrow \infty} { \frac{{A^k t}}{ {\eigenvalue^k}}}}
\overset{}
{=} \norm{ \frac{\eigenvectory^\top t}{\eigenvectory^\top \eigenvectorx}\eigenvectorx }.
\end{align*}
}
\item{
The third is again a consequence of the second,
\begin{align*}
\lim_{k \rightarrow \infty} \frac{\norm{A^{k + 1}t}}{\norm{A^k t}}
&= \lim_{k \rightarrow \infty} \frac{\eigenvalue^{k + 1} \norm{A^{k + 1}t}}{\eigenvalue^{k + 1} \norm{A^k t} \quad } 
= \eigenvalue \lim_{k \rightarrow \infty} \frac{\eigenvalue^k \quad \norm{A^{k + 1}t}}{\eigenvalue^{k + 1} \norm{A^k t} \quad} \\
&= \eigenvalue \lim_{k \rightarrow \infty} \left( \left( \frac{\norm{A^k t}}{\eigenvalue^{k}}\right)^{-1} \frac{\norm{A^{k + 1}t}}{\eigenvalue^{k + 1}} \right) \\
&\overset{}{=} \eigenvalue \left( \lim_{k \rightarrow \infty} \frac{\norm{A^k t}}{\eigenvalue^{k}} \right)^{-1} \left( \lim_{k \rightarrow \infty} \frac{\norm{A^{k + 1}t}}{\eigenvalue^{k + 1}} \right) \\
&\overset{}{=} \eigenvalue \norm{ \frac{\eigenvectory^\top t}{\eigenvectory^\top \eigenvectorx}\eigenvectorx }^{-1} \norm{ \frac{\eigenvectory^\top t}{\eigenvectory^\top \eigenvectorx}\eigenvectorx }
= \eigenvalue.
\end{align*}
}
\end{enumerate}
\end{proof}
Now we return to $\adj$ and $\adj^\top$. According to Lemma~\ref{Perron-Frobenius}, there exists an eigenvalue $\eigenvalue \in \realnumbers$, $\eigenvalue > 0$, which is equal to the spectral radius of these matrices.
In addition, the associated eigenvectors $\eigenvectory$ and $\eigenvectorx$ of $\adj$ and $\adj^\top$ are strictly positive. Since $\adj^\top \in \realnumbersmatrix{n}{n}$ is a primitive matrix and $\initstatevector^\top \in \realnumbers^{n}$ is a non-negative vector that is not the null vector, we can apply Lemma~\ref{limits of powers with vector} and obtain
\begin{align*} 
\lim_{\wordlength \rightarrow \infty} \frac{\onenorm{(\adj^\top)^\wordlength \initstatevector^\top}}{\eigenvalue^\wordlength} = \onenorm{ \frac{\eigenvectory^\top \initstatevector^\top}{\eigenvectory^\top \eigenvectorx} \eigenvectorx},
\quad\text{ and }\quad
\lim_{\wordlength \rightarrow \infty} \frac{\onenorm{(\adj^\top)^{\wordlength+1} \initstatevector^\top}}{\onenorm{(\adj^\top)^\wordlength \initstatevector^\top}} = \eigenvalue.
\end{align*}

Applying Equation \eqref{eq:card_as_one_norm}, these results allow us to draw conclusions about the behavior of $\cardinality{ \langlength{\lang}\,(\compaha) }$ when $\wordlength$ approaches infinity. We have
\begin{align*}
\lim_{\wordlength \rightarrow \infty} \frac{\cardinality{ \langlength{\lang}\,(\compaha) }}{\eigenvalue^\wordlength} 
\overset{}{=} \lim_{\wordlength \rightarrow \infty} \frac{\onenorm{(\adj^\top)^\wordlength \initstatevector^\top}}{\eigenvalue^\wordlength}
\overset{}{=}
\onenorm{ \frac{\eigenvectory^\top \initstatevector^\top}{\eigenvectory^\top \eigenvectorx} \eigenvectorx}
\end{align*}
and
\begin{align*}
\lim_{\wordlength \rightarrow \infty} \frac{\cardinality{ \lang^{=\wordlength + 1} \,(\compaha) }}{\cardinality{ \langlength{\lang}\,(\compaha) }}
\overset{}{=} \lim_{\wordlength \rightarrow \infty} \frac{\onenorm{(\adj^\top)^{\wordlength+1} \initstatevector^\top}}{\onenorm{(\adj^\top)^\wordlength \initstatevector^\top}} 
\overset{}{=} \eigenvalue.
\end{align*}
Therefore, for large values of $\wordlength$, the number of accepted words of length $\wordlength$ can be approximated by an exponential function with the spectral radius of the adjacency matrix as the growth rate and a factor composed of the eigenvectors of the adjacency matrix and its transpose associated to this eigenvalue and the initial vector,
\begin{align} \label{card ev approx}
\cardinality{ \langlength{\lang}\,(\compaha) } \approx \onenorm{ \frac{\eigenvectory^\top \initstatevector^\top}{\eigenvectory^\top \eigenvectorx} \eigenvectorx} \cdot \eigenvalue^\wordlength.
\end{align}
Since  $\aha$ is isomorphic to $\compaha$ with $c=1$, the result applies to $\aha$ as well, and thus to all automata we consider. 

\end{document}